\newtheorem{thm}{Theorem}
\newtheorem{cor}{Corollary}
\newcommand\scalemath[2]{\scalebox{#1}{\mbox{\ensuremath{\displaystyle #2}}}}
\begin{document}
\title{The Effects of Regional Vaccination Heterogeneity on Measles Outbreaks with France as a Case Study}
\author{Elaine T. Alexander$^{1},$ Savanah D. McMahon$^{2},$ Nicholas Roberts$^{3},$ Emilio Sutti$^{4},$ \\ Daniel Burkow$^{5},$ Miles Manning$^{5},$ Kamuela E. Yong$^{6,7},$ Sergei Suslov$^{6,7}$}
\date{}
\maketitle
\begin{center}
\footnotesize $^{1}$ Department of Applied Mathematics, Arizona State University, Tempe, AZ\\
\footnotesize $^{2}$ Department of Chemistry and Biochemistry, Arizona State University, Tempe, AZ\\
\footnotesize $^{3}$ School of Mathematical and Statistical Sciences, Arizona State University, Tempe, AZ\\
\footnotesize $^{4}$ Department of Physics, National University of Salta, Salta, Argentina\\
\footnotesize $^{5}$ School of Human Evolution and Social Change, Arizona State University, Tempe, AZ\\
\footnotesize $^{6}$ School of Mathematical and Statistical Sciences, Arizona State University, Tempe, AZ\\
\footnotesize $^7$  Simon A. Levin Mathematical, Computational \& Modeling Sciences Center, Arizona State University, Tempe, AZ\\
\end{center}
\begin{abstract}
{The rubeola virus, commonly known as measles, is one of the major causes of vaccine-preventable deaths among children worldwide. This is the case despite the fact that an effective vaccine is widely available.  Even in developed countries elimination efforts have fallen short as seen by recent outbreaks in Europe, which had over 30,000 cases reported in 2010. The string of measles outbreaks in France from 2008-2011 is of particular interest due to the documented disparity in regional vaccination coverage. The impact of heterogeneous vaccine coverage on disease transmission is a broad interest and the focus of this study. A Susceptible-Exposed-Infectious-Recovered (SEIR) multi-patch epidemiological model capturing the regional differences in vaccination rates and mixing is introduced.  The mathematical analysis of a two-patch system is carried out to help our understanding of the behavior of multi-patch systems. Numerical simulations are generated to aid the study of the system's qualitative dynamics.  Data from the recent French outbreaks were used to generate parameter values and to help connect theory with application.  Our findings show that heterogeneous vaccination coverage increases controlled reproduction number compared to comparable homogeneous coverage.}
\end{abstract}

\newpage
\tableofcontents
\newpage

\section{Introduction}

	Measles is a highly contagious virus from the \emph{Morbillivirus} genus \cite{Moss2012153} that  continues to affect more than 30 million individuals worldwide \cite{Ennis}. The World Health Organization (WHO) has estimated that between the years 2000 and 2003 measles has accounted for 1 in every 25 childhood deaths, significantly higher than the ratio of deaths due to other diseases such as HIV/AIDS \cite{Weisberg2007471}. This statistic reflects the high communicability of the disease, as it travels through susceptible pockets of populations at an alarmingly fast rate. It was found that there is an approximately 85\% chance that someone who comes into direct contact with the disease, such as a susceptible household contact, will become infected \cite{rowland}. Although this disease has a very high transmission rate, there are no animal reservoirs for disease resurrection \cite{Moss2012153} nor has the virus mutated enough to alter immunogenic epitopes \cite{moss2006global}. Therefore, complete eradication of the disease is theoretically possible.
	
	The disease doesn't preferentially target a certain gender or race; the prevalence of the disease in any population thus depends primarily on socioeconomic factors, environmental conditions, and the relative vaccination coverage within the region \cite{CBO9781139053518A160}. Measles epidemics tend to occur every 2 to 5 years, during the winter and spring seasons of temperate climates \cite{CBO9781139053518A160}. It is still uncertain whether this seasonality is primarily due to the actual climate conditions or the indirect social behavior and population movement that arises from these conditions.
	
\section{Biological Overview}
\subsection{Measles Virus Characteristics}
	
	The virus is transmitted through respiratory droplets present in sneezes and coughs; it initially attacks the host's respiratory tract and from there becomes systemic \cite{deVries2012248}. The initial symptoms usually occur 8-14 days after infection and are characterized by a runny nose, a gradually increasing fever, watery eyes, a cough, drowsiness, and a loss of appetite \cite{Taber}. Following these symptoms, white lesions known as Koplik's spots appear on the inside lining of the mouth opposite the molars \cite{Taber}. These lesions preclude the characteristic measles rash, which occurs two to three days later and travels from the face to the body's extremities \cite{Taber}. The rash consists of reddish patches that are approximately 3-8mm in diameter; these patches appear gradually and last from about 3 to 7 days \cite{orenstein2004clinical}. Because the characteristic rash generally appears at the end of the communicable period, the strategy of quarantining infected individuals becomes difficult to assess. There is currently no known cure for the virus \cite{Ennis}.

\subsection{Complications Associated with Infection}
	
	Although the measles virus encodes for a haemagglutinin protein that elicits a strong immune response and grants recovered individuals immunity \cite{Moss2012153}, the serious complications associated with the virus have life-long consequences that pose major public health concerns. The measles virus has the potential to affect many organ systems throughout the body.  In particular it attacks epithelial, reticuloendothelial, and white blood cells \cite{orenstein2004clinical}. Since white blood cells are necessary for proper immune function, an infected patient can often develop severe health problems that are otherwise unassociated with the initial measles infection. These complications occur in roughly 10-30\% of measles patients and account for most of the reported fatalities \cite{semba2004measles}. Common secondary infections include bacterial ear infections, pneumonia, diarrhea, and otitis media \cite{semba2004measles}.
	
	 In rare cases, some infected individuals can also develop neurological and optical problems.  An estimated 1 in 1000 patients develop a form of encephalitis at the time of measles recovery. The acute encephalitis is severe inflammation of the brain that causes vomiting, convulsions, coma, brain damage, and even death \cite{aicardi1977acute}. Since measles is associated with a vitamin A deficiency, the virus can additionally place infected persons at higher risks for eye diseases, including xerophthalmia, corneal ulceration, keratomalacia, and subsequent blindness \cite{semba2004measles}. It is estimated that 15,000 of the 60,000 blindness cases reported each year among children in low income countries can be attributed to measles \cite{semba2004measles}.

\subsection{MMR Vaccine}

	 Finding an effective measles vaccine has been a main focus of many physicians and scientists for the past century. It wasn't until 1954 that biomedical scientist John Enders and physician T.C. Peebles were able to isolate a live but attenuated measles virus in tissue cultures in Boston \cite{1991}. This vaccine entered the United States market in 1963, but had the tendency to cause fever and rash in vaccinated individuals and was eventually replaced by the MMR vaccine in the 1970's \cite{Moss2012153}. 
	 
	 The MMR vaccine is a mixture of three vaccines that immunize against measles, mumps, and rubella (German measles) \cite{Linwood}. This vaccine has been clinically proven to be safe and costs less than 1 U.S. dollar per child \cite{dardis2012review}. The introduction of this combined vaccine has significantly reduced the occurrence of measles outbreaks in developed countries.  For the first 6 months of life, an infant usually possesses natural immunity from the disease due to maternal antibodies still present in the infant's system \cite{CBO9781139053518A160}. This passive immunity will wear off and it is therefore recommended to administer the MMR vaccine in two doses to optimize efficacy: the first when the recipient is between 12 and 15 months old and the second when the recipient is between 4 and 6 years old \cite{Moss2012153}. It has been found that the vaccine has approximately 90-95\% efficacy \cite{200215911620020301}.

\section{Epidemiological Overview}
\subsection{Measles Throughout History}
There have been multiple measles epidemics throughout history. American historian William McNeil claims that measles and other related diseases most likely originated in China sometime between A.D. 37 and A.D. 653 \cite{CBO9781139053518A160}. Since then, measles epidemics have continued to plague mankind. In the Middle Ages, many people confused the disease with smallpox \cite{CBO9781139053518A160}.  When the Europeans settled in North America during the fifteenth and sixteenth centuries, they unknowingly brought measles to the indigenous populations. Because these native populations had not yet developed any antibodies to help fight off the virus, many epidemics broke out and hundreds of thousands of Native Americans reportedly died over the course of several centuries \cite{Ennis}. Several South American Indian tribes in the Amazon were also lost, with the most notable epidemic causing 30,000 deaths in 1749 \cite{Weisberg2007471}. It wasn't until 1758 that physicians began to classify the disease as ``infectious" \cite{CBO9781139053518A160}. Despite many attempts to prevent and cure the disease, these epidemics continued to occur all around the world; during the American Civil War, 4,000 soldiers perished after becoming affected \cite{Ennis}. Outbreaks continued to occur until a weakened measles vaccine was introduced in the 1960's. The most current measles vaccine has substantially decreased the number of measles outbreaks occurring worldwide.

\subsection{Current Measles Outbreaks}
Despite the MMR vaccine available today, many countries are still dealing with the disease. It is estimated that only 50\% of all measles cases are actually reported to the World Health Organization.  This statistic indicates that every year the virus infects an estimated 50 million individuals, and as a result, causes 1.5 million deaths annually \cite{CBO9781139053518A160}. These numbers reflect the ease with which measles can re-infect a community even when only small pockets of the population are susceptible. In Quebec, Canada, where the average population immunity is estimated to be 95\%, an initial outbreak consisting of 94 cases transmitted through largely unrelated networks of unvaccinated individuals \cite{Moss2012153}. The same problem is exacerbated in developing countries where vaccination coverage is more sparse throughout the regions, namely in southern and eastern Africa. Out of the 46 African countries affected, recent measles outbreaks have been the most prevalent in South Africa, Zimbabwe, Zambia, and Malawi \cite{Moss2012153}.

Although measles was considered to be eliminated in the United States as of 2000, many European countries are still battling this disease.  The majority of outbreaks occur in Bulgaria, France, Italy, Germany, Ireland, the United Kingdom, and Spain \cite{cottrell2011measles}. In addition, these outbreaks are beginning to affect the U.S. again.  In 2011, the U.S. saw the highest number of annual measles cases in 2011 since 1995.  The Centers for Disease Control and Prevention primarily attribute these new cases to Americans traveling to Europe and bringing back the disease \cite{sepkowitz}.

\subsection{The Measles Epidemic in France}
After the MMR vaccine was instituted in France during the 1980's, the disease was practically nonexistent in the country. Unfortunately, the virus reappeared in 2008 \cite{freymuth2011measles}. The measles epidemic in France during 2008-2011 was the largest modern measles outbreak in Europe and is projected to increase even more during the next cycle \cite{cottrell2011measles}. Thus far, over 22,000 cases have been reported in the country, with 5,000 patients hospitalized from associated complications \cite{8586668720130301}. According to the World Health Organization, the measles strain originating from France has since traveled to Denmark, Gemany, Italy, Romania, Russia, and Belgium \cite{Green01072011}.
 \begin{figure}[h]
\begin{centering}
\includegraphics[scale=0.35]{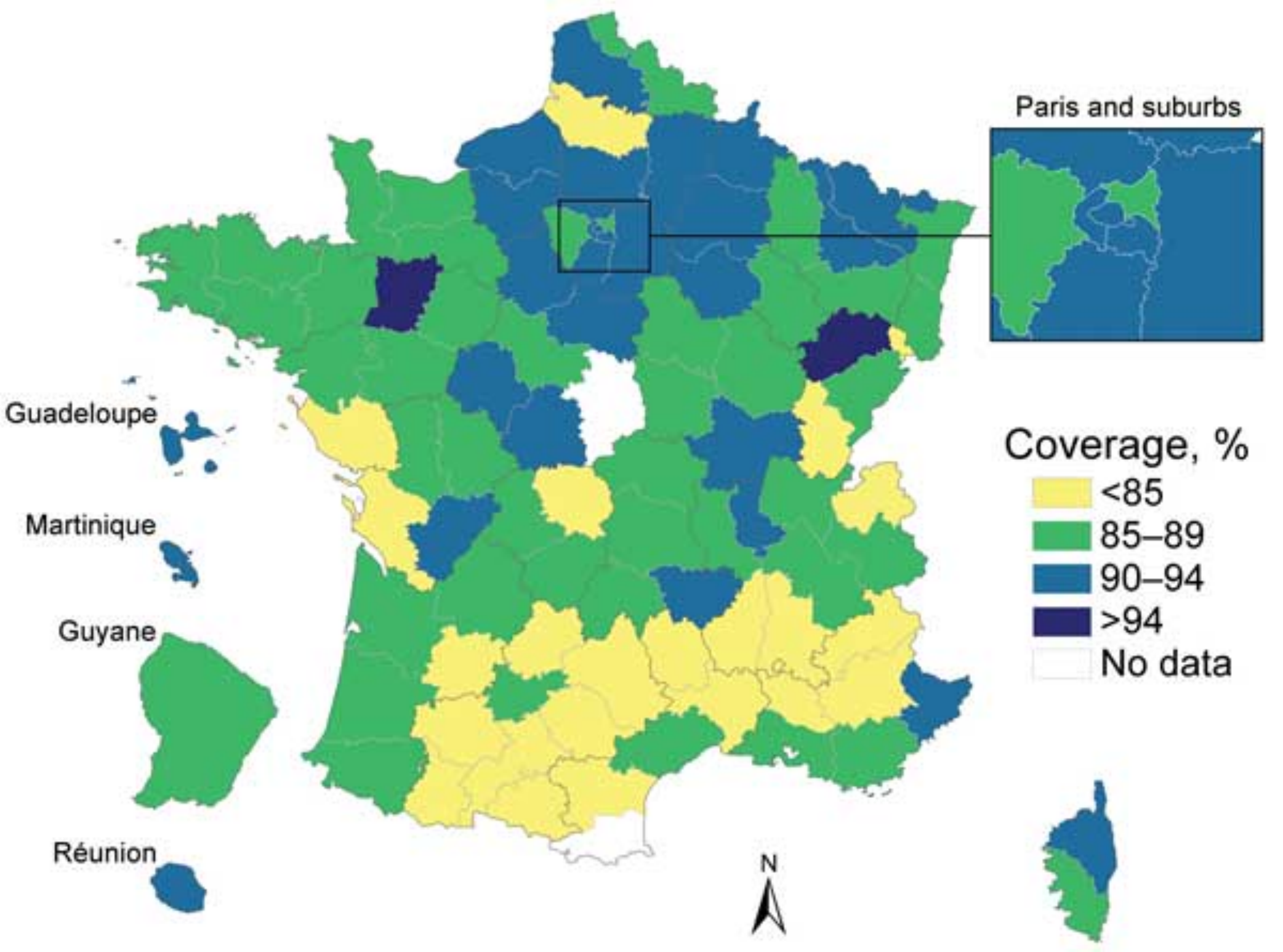}
\label{Fig3}
\caption{A map of France representing the MMR1 coverage rates in each district from 2003-3008. The data was collected from health certificates for children at 24 months of age. Figure reproduced from \cite{8586668720130301}.}
\end{centering}
\end{figure}
	
	Prior to the outbreak, the average immunization coverage in France fell below the recommended 95\% \cite{freymuth2011measles}. The primary cause for this average decrease was the lack of homogenous vaccination rates throughout the country; some regions had over 95\% of citizens vaccinated, while others had under 85\% of citizens vaccinated \cite{8586668720130301}. It has been noted that the districts containing the lowest vaccine coverage rates were located in southern France. As shown in Fig. 2, there are large disparities of vaccine coverage in each French district, suggesting that this heterogeneity may significantly impact the disease's transmission. 
\subsection{The Anti-Vaccine Trend}

Recently, a large portion of the global public has been losing confidence in the vaccine industry; this has had a significant impact on disease elimination efforts. For example, there have been multiple poliovirus outbreaks in Northern Nigeria due to the boycotting of the polio vaccine campaign \cite{moss2006global}. Many anti-vaccine proponents argue that the particular disease being vaccinated against is relatively mild and does not need vaccination coverage while others believe that the vaccine poses more of a danger than does the disease itself. Additionally, there are certain conservative religous sects that oppose vaccination and other modern health care methodologies.  Whatever the reason, new epidemics are occurring throughout the world as a result of this social trend.

	It has been speculated that the main cause for the wide disparity of vaccine coverage in France is not due to the vaccine's cost, but because many parents in particular districts are refusing to vaccinate their children.  When a fraudulent paper was published by Wakefield, et al. in 1998 that hypothesized an association between the MMR vaccine and autism, a widespread public fear of the vaccine was born.  Such unfounded fears were exacerbated in France when concerns surfaced around a hepatitis b vaccine. When these fears are combined with France's less aggressive vaccination policies, the anti-vaccination movement has fertile ground in which to grow \cite{moss2006global}. 
	
	We are motivated to construct a mathematical model for the spread of measles in France because we are interested in potential underlying forces inherent in the epidemic. We are specifically interested in the geographic heterogeneity of France because the epidemics correlate with low vaccination coverage in the southern regions. We aim to determine the impact of heterogenous vaccination coverage on epidemics to gain insight on disease dynamics within developed and developing countries. 

\section{Model}
The outbreaks in France and the heterogeneous nature of vaccine coverage motivated us to study a two patch SEIR model. The country is divided by regions into separate populations based on the vaccination rate given by \cite{8586668720130301}. We justify the use of an SEIR model because measles has a latent period where patients are infected but not infective. Additionally, we assume that the population is born either susceptible or immune based on a combination the vaccination rate of the regions and the expected efficacy of the vaccine. The recovered class also includes those who are vaccinated and imbues perfect immunity against measles. Finally, we assume that birth rate of the population is equal to the death rate, and infection with measles does not increase death rate.\\

\subsection{General Model}

Generally, we construct our multi-regional model with four components: susceptible individuals that are unvaccinated, exposed individuals in the latent period of the disease, infectious individuals, and resistant individuals that are either recovered or vaccinated,

\begin{eqnarray*}
	\dot{S}_i&=&-S_iF_i(I_i,I_j)-\mu S_i+(1-\delta_i)\mu N_i,\\
	\dot{E}_i&=&S_iF_i(I_i,I_j)-(\mu+\phi)E_i,\\
	\dot{I}_i&=&\phi E_i-(\mu+\gamma)I_i,\\
	\dot{R}_i&=&\gamma I_i -\mu R_i + \delta_i\mu N_i,\\
\end{eqnarray*}

where $i=1,2,$ $j=1,2,$ and $i\neq j.$ The function $F_i$ models the mixing of the two patches and includes the contact rate and probability of infection given a successful contact. The function $F_i$ can take many forms depending on biological assumptions. Simple heterogenous mixing as discussed in \cite{brauer2008epidemic} is a reasonable assumption to begin with. Proportional mixing assumes that every individual across the entire population associates with one another solely based upon that individual's activity level. Preferential mixing assumes that a subsection of each patch only mixes with members of the same patch, while the remaining patch population mixes proportionally with the populations of all patches \cite{nold1980heterogeneity, brauer2008epidemic}.  While the proportional mixing assumptions could accurately represent vaccine coverage disparities, it fails to represent the natural patch isolation that occurs in such a large geographical region as France.  Therefore, we chose to use a preferential mixing model.  Such a model allows the representation of heterogeneous vaccination and activity rates as well as the natural patch preference that occurs when considering a country regionally.  Preferential mixing in the general form expressed by \cite{brauer2008epidemic} then gives the function:
\begin{center}
$\displaystyle F_i(I_i,I_j)=a_i\left(p_{ii}\frac{I_i}{N_i}+\sum_{j\neq i} p_{ij}\frac{I_j}{N_j}\right),$
\end{center}
where $a_i$ is representative of number of contacts that individuals make in a given time and the likelihood there would be a successful transmission given the contact was with an infective.  The proportion of contacts that susceptibles make with people in their own patch is expressed with the term $p_{ii}\frac{I_i}{N_i},$ and the proportion of contacts with members of other patches is expressed in the term $\sum p_{ij}\frac{I_j}{N_j}.$

Thus the full form of our system given vaccination and preferential mixing is given by:

\begin{eqnarray*}
	\dot{S}_i&=&-a_iS_i\left(p_{ii}\frac{I_i}{N_i}+\sum p_{ij}\frac{I_j}{N_j}\right)-\mu S_i+(1-\delta_i)\mu N_i,\\
	\dot{E}_i&=&a_iS_i\left(p_{ii}\frac{I_i}{N_i}+\sum p_{ij}\frac{I_j}{N_j}\right)-(\mu+\phi)E_i,\\
	\dot{I}_i&=&\phi E_i-(\mu+\gamma)I_i,\\
	\dot{R}_i&=&\gamma I_i -\mu R_i + \delta_i\mu N_i.\\
\end{eqnarray*}
where $i=1,2, j\neq i.$

\subsection{Positive Invariance and  Boundedness}
First we will show upper boundedness of our model.  As no real biological system would blow up to infinity, it is important to show that any model is bounded above.  To do so consider that $N_i=S_i+E_i+I_i+R_i$ and $\dot{N}_i=\dot{S}_i+\dot{E}_i+\dot{I}_i+\dot{R}_i$ which becomes $\dot{N}_i = \mu N_i - \mu (S_i + E_i + I_i + R_i)  = \mu N - \mu N = 0$.  Therefore, as $\dot{N}_i=0$ for all time the population is constant for all time.  Thus $N(t)=N_i(0)$ for all time, and is bounded by the initial condition for all time.  This immediately implies the boundedness of all class variables to be in the interval $[0,~N_i(0)]$.

Positive invariance guarantees that the model biologically is well posed and will not create illogical solutions and negative populations.  We will now show positive invariance of the system given initial conditions $(S_i(0),E_i(0),I_i(0),R_i(0))>0$ and $N_i=S_i+E_i+I_i+R_ i=N_i(0)$.  Towards contradiction we assume $\exists t_s>t_0$ such that the first zero point is $S_i(t_s)=0$.  Now consider $\dot{S}_i(t_s)|_{S_i(t_s)=0}=(1-\delta_i)N_i>0$ because $\delta_i < 1$ and $N_i > 0$ given $N_i(0) > 0$.  This implies that $\exists t_c$ such that $0 \leq t_c \leq t_s$ and $S_i(t_c)<0$.  However, given that $S_i(0)>0$ and $S_i(t_s)=0$ is the first zero point, we have a contradiction.  Therefore by proof through contradiction, $S_i>0$ $\forall t\in[0,\infty)$. 

For all other state variables the proof follows similarly.

%
%

Positive invariance has been proven and have now shown all state variables are bounded both above and below. Therefore given $(S_i(0),E_i(0),I_i(0),R_i(0),N_i(0))>0$ all classes and total population $0<(S_i,E_i,I_i,R_i,N_i )\leq N_i(0)  ~~~\forall t\in[0,\infty)$ and the model is well posed. 

\subsection{Rescaling}
Starting from a multi-patch mixing model with vaccination, we examine how a special case of preferential mixing can be expressed for a two-patch system.

\begin{eqnarray*}
	\dot{S}_i&=&-a_iS_i\left(p_{ii}\frac{I_i}{N_i}+\sum p_{ij}\frac{I_j}{N_j}\right)-\mu S_i+(1-\delta_i)\mu N_i,\\
	\dot{E}_i&=&a_iS_i\left(p_{ii}\frac{I_i}{N_i}+\sum p_{ij}\frac{I_j}{N_j}\right)-(\mu+\phi)E_i,\\
	\dot{I}_i&=&\phi E_i-(\mu+\gamma)I_i,\\
	\dot{R}_i&=&\gamma I_i -\mu R_i + \delta_i\mu N_i,\\
\end{eqnarray*}

where
	
\begin{eqnarray*}
	p_{ii}=p_{11}&=&\pi_i+(1-\pi_i)p_1,\\
	p_{ij}=p_{12}&=&(1-\pi_i)p_2,\\
	p_1=p_2&=&\frac{(1-\pi_i)a_iN_i}{(1-\pi_1)a_1N_2+(1-\pi_2)a_2N_2}.\\
\end{eqnarray*}
\\
For simplifying purposes, assume that $a_i=a_j,$ $N_i=N_j$ and $\pi_i=\pi_j$;
\begin{eqnarray*}
	\frac{(1-\pi)aN}{(1-\pi)aN+(1-\pi)aN}&=&\frac{(1-\pi)}{(1-\pi)+(1-\pi)},\\
	p_1=p_2&=&\frac{(1-\pi)}{(1-\pi)+(1-\pi)}=\frac{1}{2},\\
	p_{11}=\pi+(1-\pi)p_1&=&\pi+\frac{1-\pi}{2},\\
	p_{12}=(1-\pi)p_2&=&\frac{1-\pi}{2}.\\
\end{eqnarray*}	
\\
Therefore the system becomes:
\begin{eqnarray*}
\dot{S}_i&=&-aS_i\left(\left(\pi+\frac{1-\pi}{2}\right)\frac{I_i}{N_i}+\left(\frac{1-\pi}{2}\right)\frac{I_j}{N_j}\right)-\mu S_i+(1-\delta_i)\mu N_i,\\
	\dot{E}_i&=&aS_i\left(\left(\pi+\frac{1-\pi}{2}\right)\frac{I_i}{N_i}+\left(\frac{1-\pi}{2}\right)\frac{I_j}{N_j}\right)-(\mu+\phi)E_i,\\
	\dot{I}_i&=&\phi E_i-(\mu+\gamma)I_i,\\
	\dot{R}_i&=&\gamma I_i -\mu R_i + \delta_i\mu N_i.\\
\end{eqnarray*}
\\
Now consider the term;
\begin{eqnarray*}
	aS_i\left(\left(\pi+\frac{1-\pi}{2}\right)\frac{I_i}{N_i}+\left(\frac{1-\pi}{2}\right)\frac{I_j}{N_j}\right),\\
	p_{11}=\pi+\frac{1-\pi}{2}=\frac{2\pi}{2}+\frac{1-\pi}{2}=\frac{1+\pi}{2},\\
	1-p_{12}=1-\frac{1-\pi}{2}=\frac{2}{2}-\frac{1-\pi}{2}=\frac{1+\pi}{2}.\\
\end{eqnarray*}
\\
This allows the following rescaling;
\begin{eqnarray*}
	\rho=p_{12}=\frac{1-\pi}{2},\\
	(1-\rho)=p_{11}=\frac{1+\pi}{2},\\
	aS_i\left((1-\rho)\frac{I_i}{N_i}+\rho\frac{I_j}{N_j}\right),\\
	aS_i(1-\rho)\left(\frac{I_i}{N_i}+\left(\frac{\rho}{1-\rho}\right)\frac{I_j}{N_j}\right),\\
	\beta S_i\left(\frac{I_i}{N_i}+\alpha\frac{I_j}{N_j}\right),\\
\end{eqnarray*}
where
\begin{eqnarray*}
	\beta=a(1-\rho),\\
	\alpha=\frac{\rho}{1-\rho}.\\
\end{eqnarray*}
\\
The model is constructed to have a constant population invariant with respect to time, i.e. where $N_i=S_i+E_i+I_i+R_i$ and the time derivative of $N_i$ is $\dot{N}_i=\dot{S}_i+\dot{E}_i+\dot{I}_i+\dot{R}_i=0.$ In addition, the $R_i$ class does not impact the dynamics the system, thus the two-patch model can be reduced to a $6$ dimensional system by $R_i=N_i-S_i-E_i-I_i.$  In addition, we divide each variable by its respective patch population size, $(s_1,e_1,i_1,s_2,e_2,i_2)=\left(\frac{S_1}{N_1},\frac{E_1}{N_1},\frac{I_1}{N_1},\frac{S_2}{N_2},\frac{E_2}{N_2},\frac{I_2}{N_2}\right).$ Thus the rescaled two-patch system can be written as:

\begin{eqnarray*}
	\dot{s}_1&=&-\beta s_1(i_1+\alpha i_2)-\mu s_1+(1-\delta_1)\mu,\\
	\dot{e}_1&=&\beta s_1(i_1+\alpha i_2)-(\mu + \phi)e_1,\\
	\dot{i}_1&=&\phi e_1-(\mu + \gamma)i_1,\\
	\dot{s}_2&=&-\beta s_2(i_2+\alpha i_1)-\mu s_2+(1-\delta_2)\mu,\\
	\dot{e}_2&=&\beta s_2(i_2+\alpha i_1)-(\mu + \phi)e_2,\\
	\dot{i}_2&=&\phi e_2-(\mu + \gamma)i_2,\\
\end{eqnarray*}
which shall be investigated in the sections to follow.

\section{Analysis}
\subsection{Basic Reproductive Number and the Disease Free Equilibrium} 
In epidemiological models, the basic reproductive number $(\mathcal{R}_0)$ is an important element in analysis.  In brief, $\mathcal{R}_0$ represents the number of new cases that stem from an initial infective within an entirely susceptible population.  It is often the case that if $\mathcal{R}_0<1$ the disease will die out and if $\mathcal{R}_0>1$ the disease will persist.  However, this is not always the case in more complicated models.  For models that account for vaccination, it is also necessary to consider the reproduction number of the disease given the controls $(\mathcal{R_C}).$  \\

In our two-patch system we find both $\mathcal{R}_0$ and $\mathcal{R_C}$ using the next generation operator method as outlined in \cite{van2002reproduction}.  This allows for the restructuring of the model into a vector equation and finding the next generation matrix, and evaluating at the disease free equilibrium.  The spectral radius, (largest eigenvalue), of this matrix is equivalent to $\mathcal{R_C}$ for the system.\\

The disease free equilibria (DFE) occurs when a fixed point in any epidemiological system is such that all disease-carrying classes are zero.  For our rescaled two-patch model, the absence of disease occurs when $e^*_1=e^*_2=i^*_1=i^*_2=0,$ $s_1=s_1^*,$ and $s_2=s_2^*.$  (Here, $x_i^*$ indicates a fixed value of $x_i$). Given the necessary absence of disease and the assumption of the existence of some $s_1=s_1^*$ and $s_2=s_2^*,$ we look for a point which satisfies,\\
	\begin{eqnarray*}
	\dot{s}_1&=&-\beta s_1(i_1+\alpha i_2)-\mu s_1+(1-\delta_1)\mu=0,\\
	\dot{e}_1&=&\beta s_1(i_1+\alpha i_2)-(\mu + \phi)e_1=0,\\
	\dot{i}_1&=&\phi e_1-(\mu + \gamma)i_1=0,\\
	\dot{s}_2&=&-\beta s_2(i_2+\alpha i_1)-\mu s_2+(1-\delta_2)\mu=0,\\
	\dot{e}_2&=&\beta s_2(i_2+\alpha i_1)-(\mu + \phi)e_2=0,\\
	\dot{i}_2&=&\phi e_2-(\mu + \gamma)i_2=0.\\
\end{eqnarray*}

We then find the DFE by considering

\begin{eqnarray*}
-\mu s_1^*+(1-\delta_1)\mu=0,\\
-\mu s_2^*+(1-\delta_2)\mu=0.\\
\end{eqnarray*}

From this, we determine the disease free equilibrium to exist at

\begin{eqnarray*}
	s_1^*=1-\delta_1,\\
	s_2^*=1-\delta_2,\\
	e_1^*=e_2^*=0,\\
	i_1^*=i_2^*=0.\\
\end{eqnarray*}

To find the next generation matrix, we express the system as a set of vector equations where $X$ is the vector of infected classes and $Y$ is the vector of uninfected classes. Thus, we have
\begin{eqnarray*}
\frac{dX}{dt} &=& \mathcal{F}(X,Y) - \mathcal{V}(X,Y), \\
\frac{dY}{dt} &=& \mathcal{W}(X,Y).\\
\end{eqnarray*}
Here, $\mathcal{F}(X,Y)$ represents flows from $Y$ into $X,$ and $\mathcal{V}(X,Y)$ represents all other flows. We then set 

\begin{eqnarray*}
F = \left(\frac{\partial \mathcal{F}}{\partial X} \right) _{(DFE)}, V = \left(\frac{\partial \mathcal{V}}{\partial X}\right) _{(DFE)}.\\
\end{eqnarray*}
For our system, one gets

\begin{eqnarray*}
\mathcal{F}(X,Y) = \left[ \begin{array}{c}
				\beta s_1(i_1 +\alpha i_2) \\
				0\\
				\beta s_2(i_2 +\alpha i_1) \\
				0\\
				\end{array} \right] &,&			
\mathcal{V}(X,Y) = \left[ \begin{array}{c}
				e_1(\mu + \phi) \\
				-\phi e_1 + i_1(\mu + \gamma) \\
				e_2 (\mu + \phi) \\
				-\phi e_2 + i_2(\mu + \gamma) \\
				\end{array} \right], \\			
F_{(DFE)} = \left[ \begin{array}{c  c  c  c}
			0 & \beta(1-\delta_1) & 0 & \beta(1-\delta_1)\alpha \\
			0 & 0 & 0 & 0 \\
			0 & \beta(1-\delta_2)\alpha & 0 & \beta(1-\delta_2) \\
			0 & 0 & 0 & 0\\
			\end{array} \right] &,&
V_{(DFE)} = \left[ \begin{array}{c c c c}
			\mu + \phi & 0 & 0 & 0\\
			-\phi & \mu+\gamma & 0 & 0\\
			0 & 0 & \mu+\phi & 0\\
			0 & 0 & -\phi & \mu+\gamma \\
			\end{array}\right]. \\
\end{eqnarray*}
We can compute the next generation matrix $FV^{-1},$ where the spectral radius of $FV^{-1}$ is the control reproduction number $\mathcal{R_C},$ and the case without vaccination is the basic reproduction number $\mathcal{R}_0.$ As a result,

$$\mathcal{R_C} =\displaystyle \frac{\beta \phi \left((1-\delta_1)+(1-\delta_2) + \sqrt{4 \alpha^2 (1-\delta_1)(1-\delta_2) + (\delta_1 - \delta_2)^2}\right)}{2(\mu+\phi)(\mu+\gamma)},$$
$$\mathcal{R}_0 = \displaystyle \frac{\beta \phi (1+\alpha)}{(\mu+\phi)(\mu+\gamma)}.$$

From literature, (\cite{keeling2002understanding}, \cite{bauch2003transients}, \cite{bolker1993chaos}), we find $\mathcal{R}_0$ to be commonly considered between 12 and 16. We use this information in our estimation of other parameters as seen in Appendix A.

From theorems of the next generation operator and definitions of $\mathcal{R}_0$ and $\mathcal{R_C}$ we also know immediately that the disease free equilibrium is stable when $\mathcal{R_C}<1.$  Therefore no further analysis is needed.  It is the case that if
$$\mathcal{R_C} =\displaystyle \frac{\beta \phi \left((1-\delta_1)+(1-\delta_2) + \sqrt{4 \alpha^2 (1-\delta_1)(1-\delta_2) + (\delta_1 - \delta_2)^2}\right)}{2(\mu+\phi)(\mu+\gamma)}<1$$
the DFE is stable.

\subsection{Endemic Equilibrium} 
Solving for the endemic equilibrium explicitly, $(s_1^*,i_1^*\neq0,e_1^*\neq0,s_2^*,e_2^*\neq0,i_2^*\neq0),$ proved analytically formidable. Instead, we adopted the method of solving the equations systematically to produce expressions for the equilibrium points in terms of the exposed classes. then

\begin{eqnarray*}
s_1^* &=& \frac{\mu(1-\delta_1)(\mu+\gamma)}{ \beta \phi(e_1^* +\alpha e_2^*) + \mu(\mu+\gamma)},\\
e_1^* &=& \frac{e_2^*(\mu(\beta \phi (1-\delta_2) - (\mu+\phi)(\mu+\gamma)) - e_2^* \beta \phi (\mu+\phi))}{\alpha \beta \phi (e_2^* (\mu + \phi) - \mu (1-\delta_2))},\\
i_1^* &=& \frac{e_1^* \phi}{\mu+\gamma},\\
s_2^* &=& \frac{\mu(1-\delta_2)(\mu+\gamma)}{\beta \phi(e_2^* +\alpha e_1^*)  + \mu(\mu+\gamma)},\\
e_2^* &=& \frac{e_1^*(\mu(\beta \phi (1-\delta_1) - (\mu+\phi)(\mu+\gamma)) - e_1^* \beta \phi (\mu+\phi))}{\alpha \beta \phi (e_1^* (\mu + \phi) - \mu (1-\delta_1))},\\
i_2^* &=& \frac{e_2^* \phi}{\mu+\gamma}.\\
\end{eqnarray*}

Upon substituting $D_1 = 1-\delta_1, D_2 = 1=\delta_2, A = \mu+\gamma, $ and $ B=\mu+\phi,$ our endemic equilibrium equations become

\begin{eqnarray*}
s_1^* &=& \frac{\mu D_1A}{\beta \phi(e_1^* +\alpha e_2^*)  + \mu A},\\
e_1^* &=& \frac{e_2^*(\mu(\beta \phi D_2 -AB) - e_2^* \beta \phi B)}{\alpha \beta \phi (e_2^*B - \mu D_2)},\\
i_1^* &=& \frac{e_1^* \phi}{A},\\
s_2^* &=& \frac{\mu D_2 A}{\beta \phi(e_2^* +\alpha e_1^*)  + \mu A},\\
e_2^* &=& \frac{e_1^*(\mu(\beta \phi D_1 - AB) - e_1^* \beta \phi B)}{\alpha \beta \phi (e_1^* B - \mu D_1)},\\
i_2^* &=& \frac{e_2^* \phi}{A}.\\
\end{eqnarray*}
Notice that all potential equilibrium values, namely $s_1^*,i_1^*,s_2^*,i_2^*,$ are positive if and only if $e_1^*>0$ and $e_2^*>0.$ Thus we analyze the equations for $e_1^*$ and $e_2^*$ for conditions under which these variables are positive. The results are presented in the following statements.\\

\begin{thm}
If the endemic equilibrium exists in $\mathbb{R}^+$ then $e_i^*$ is bounded.
\end{thm}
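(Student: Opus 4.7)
The plan is to derive an explicit upper bound on each $e_i^*$ by combining the equilibrium equations for $s_i$ and $e_i$, sidestepping the more complicated closed-form expressions displayed just above. The key observation is that adding $\dot s_i = 0$ and $\dot e_i = 0$ causes the infection term $\beta s_i^*(i_i^* + \alpha i_j^*)$ to cancel, leaving the simple identity
$$-\mu s_i^* + (1-\delta_i)\mu - (\mu+\phi)e_i^* = 0,$$
which rearranges to
$$e_i^* \;=\; \frac{\mu\bigl((1-\delta_i) - s_i^*\bigr)}{\mu+\phi}.$$

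With this identity in hand the boundedness claim is essentially free. Under the hypothesis that the endemic equilibrium lies in $\mathbb{R}^+$ we have $s_i^* > 0$, and $\dot s_i = 0$ moreover forces $s_i^* \leq 1-\delta_i$, so
$$0 \;<\; e_i^* \;<\; \frac{\mu(1-\delta_i)}{\mu+\phi} \qquad (i=1,2).$$
The bound depends only on the vital rates $\mu$, $\phi$ and the patch-specific vaccination coverage $\delta_i$; it is in particular independent of the coupling parameter $\alpha$, the transmission rate $\beta$, and the other patch's state.

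The only subtle point — and what I would flag as the main obstacle — is recognizing that this shortcut is available at all, since the preceding discussion pushes the reader toward the rational expressions for $e_1^*$ and $e_2^*$. An alternative, more laborious route would be a direct sign analysis on those closed-form formulas: one would check case by case when the numerator $e_2^*\bigl(\mu(\beta\phi D_2 - AB) - e_2^*\beta\phi B\bigr)$ and the denominator $\alpha\beta\phi(e_2^* B - \mu D_2)$ share a sign, use the strict positivity of $\mu A/(\beta\phi)$ to rule out the ``both positive'' case, and conclude that positivity of $e_1^*$ forces $e_2^* B - \mu D_2 < 0$, i.e.\ $e_2^* < \mu(1-\delta_2)/(\mu+\phi)$; the symmetric argument then bounds $e_1^*$. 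Both routes land on the same explicit bound, so I would present the cleaner $\dot s_i + \dot e_i = 0$ derivation and defer the sign analysis to a remark.
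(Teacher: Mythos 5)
Your proposal is correct, and it takes a genuinely different route from the paper. The paper works directly with the rational closed-form expression $e_i^*=\frac{e_j^*(\mu(D_j\beta\phi-AB)-B\beta\phi e_j^*)}{\alpha\beta\phi(Be_j^*-D_j\mu)}$ and performs a sign analysis: assuming the denominator positive forces $e_j^*>\frac{D_j\mu}{B}$, while positivity of the numerator then forces $e_j^*<\frac{D_j\mu}{B}-\frac{\mu A}{\beta\phi}$, a contradiction; hence the denominator is negative and $e_j^*<\frac{D_j\mu}{B}=\frac{\mu(1-\delta_j)}{\mu+\phi}$. That is precisely the ``laborious alternative'' you sketch in your closing paragraph, so you have in effect reconstructed the paper's argument as a remark. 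Your primary argument --- adding $\dot s_i=0$ and $\dot e_i=0$ to cancel the incidence term and obtain $e_i^*=\frac{\mu\left((1-\delta_i)-s_i^*\right)}{\mu+\phi}$, whence $e_i^*<\frac{\mu(1-\delta_i)}{\mu+\phi}$ from $s_i^*>0$ alone --- is cleaner and arrives at the same bound while making transparent that it is independent of $\alpha$, $\beta$, and the other patch. What the paper's longer route buys is that its case analysis (denominator sign versus the condition $D_j\beta\phi\gtrless AB$) is reused verbatim in Theorems 2 and 3, so the machinery is not wasted there; but as a proof of Theorem 1 in isolation, your derivation is preferable.
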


\begin{proof}
W.L.O.G., consider the equation for $e_i^*$ from the endemic equilibrium: 
$$e_i^*=\frac{e_j^*(\mu(D_j\beta\phi-AB)-B\beta\phi e_j^*)}{\alpha\beta\phi(Be_j^*-D_j\mu)}$$ where $i\neq j.$ Assume that the denominator is positive, 
$$\alpha\beta\phi(Be_j^*-D_j\mu)>0$$ 
Then it must hold that,
$$Be_j^*>D_j\mu\Rightarrow e_j^*>\frac{D_j\mu}{B}$$ 
The numerator must then also be positive for the endemic equilibrium to exist in $\mathbb{R}^+,$ 
$$e_j^*(\mu(D_j\beta\phi-AB)-B\beta\phi e_j^*)>0$$ 
which is only possible if two conditions hold: 
$$D_j\beta\phi>AB$$ 
and 
$$\mu(D_j\beta\phi-AB)>B\beta\phi e_j^*.$$
If we assume $D_j\beta\phi>AB,$ then: 
$$\mu(D_j\beta\phi-AB)>B\beta\phi e_j^*$$
$$\Rightarrow\frac{D_j\mu}{B}-\frac{\mu A}{\beta\phi}>e_j^*>\frac{D_j\mu}{B},$$
which is a contradiction since $\frac{\mu A}{\beta\phi}>0.$ Then the only possibility is that $e_i^*<\frac{D_j\mu}{B}.$ Thus if the the endemic equilibrium exists in $\mathbb{R}^+,$ $e_i^*$ must be bounded above by $\frac{D_j\mu}{B}.$
\end{proof}

\begin{thm}
If $e_j^*<\frac{D_j\mu}{B},$ then the endemic equilibrium always exists in $\mathbb{R}^+.$
\end{thm}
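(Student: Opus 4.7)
The natural route is to verify directly, under the hypothesis $0<e_j^*<D_j\mu/B$, that every component of the equilibrium vector is strictly positive. The structural observation is that once both $e_1^*>0$ and $e_2^*>0$ are established, the remaining components come for free: $i_i^*=\phi e_i^*/A$ is immediate, and $s_i^*=\mu D_i A/[\beta\phi(e_i^*+\alpha e_j^*)+\mu A]$ has a strictly positive denominator and a strictly positive numerator (since $\delta_i<1$ gives $D_i>0$), so $s_i^*>0$ as well. Consequently the proof reduces to showing that $e_i^*>0$ follows from the stated bound on $e_j^*$.

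For that reduction, I would examine the closed form
$$e_i^*=\frac{e_j^*\bigl(\mu(D_j\beta\phi-AB)-B\beta\phi\, e_j^*\bigr)}{\alpha\beta\phi(Be_j^*-D_j\mu)}$$
and analyze the signs of the numerator and denominator independently. The denominator is the easy piece: $\alpha\beta\phi>0$ and the hypothesis $e_j^*<D_j\mu/B$ gives $Be_j^*-D_j\mu<0$, so the denominator is strictly negative. Hence $e_i^*>0$ iff the numerator is also strictly negative. Since the prefactor $e_j^*$ is positive, this is equivalent to the scalar inequality
$$\mu(D_j\beta\phi-AB)<B\beta\phi\, e_j^*.$$

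The main obstacle is extracting this last inequality from the single bound $e_j^*<D_j\mu/B$. If $D_j\beta\phi\leq AB$, the left side is non-positive while the right side is strictly positive, so the inequality is automatic and the conclusion is immediate. In the complementary regime $D_j\beta\phi>AB$, which is the biologically interesting endemic case, one needs the companion lower bound $e_j^*>D_j\mu/B-A\mu/(\beta\phi)$. I would obtain this by feeding the coupled relation $e_j^*=f(e_i^*)$ back into itself: Theorem~1 already forces $e_i^*<D_i\mu/B$, and substituting this upper bound into the formula for $e_j^*$ pins $e_j^*$ away from zero by exactly the required margin, ruling out the bad subinterval and leaving only the negative/negative sign pattern. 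Combining this with the automatic positivity of $s_i^*$ and $i_i^*$ noted at the outset completes the verification that the equilibrium lies in $\mathbb{R}^+$.
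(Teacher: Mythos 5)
Your overall route is the same as the paper's: reduce everything to the positivity of $e_i^*$ (the expressions for $i_i^*$ and $s_i^*$ are manifestly positive once $e_1^*,e_2^*>0$), observe that the hypothesis makes the denominator $\alpha\beta\phi(Be_j^*-D_j\mu)$ negative, and split on the sign of $D_j\beta\phi-AB$ to force the numerator to be negative as well. Your case $D_j\beta\phi\le AB$ coincides with the paper's Case 1 and is fine.

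The problem is in the other case. You are right that when $D_j\beta\phi>AB$ the numerator is negative only if
$$e_j^*>\frac{D_j\mu}{B}-\frac{\mu A}{\beta\phi},$$
a strictly positive lower bound that does \emph{not} follow from the one-sided hypothesis $e_j^*<\frac{D_j\mu}{B}$; in identifying this you have actually located the weak step of the paper's own proof, which at this point simply asserts that the two-sided constraint ``always holds'' under the upper bound alone. However, your proposed repair does not close the gap. Feeding the Theorem~1 bound $e_i^*<\frac{D_i\mu}{B}$ into the companion formula $e_j^*=g(e_i^*)$ cannot ``pin $e_j^*$ away from zero'': the numerator of $g$ carries an overall factor of $e_i^*$ while its denominator tends to $-\alpha\beta\phi D_i\mu\neq 0$ as $e_i^*\to 0^+$, so $g(e_i^*)\to 0$ there. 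An upper bound on $e_i^*$ therefore yields no lower bound on $e_j^*$, and the ``bad'' subinterval $\bigl(0,\tfrac{D_j\mu}{B}-\tfrac{\mu A}{\beta\phi}\bigr)$ is not excluded by the argument you describe. Any genuine exclusion of that subinterval must use more of the coupled structure (for example, the alternative representation $s_j^*=D_j-\tfrac{B}{\mu}e_j^*$ combined with $s_j^*<\tfrac{AB}{\beta\phi}$ gives exactly the needed lower bound, but deriving that inequality already presupposes $e_i^*>0$, which is what is being proved). As written, your Case~2 is incomplete.
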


\begin{proof}
W.L.O.G., consider the equation for $e_i^*$ from the endemic equilibrium: 
$$e_i^*=\frac{e_j^*(\mu(D_j\beta\phi-AB)-B\beta\phi e_j^*)}{\alpha\beta\phi(Be_j^*-D_j\mu)}$$ where $i\neq j.$ When $e_j^*<\frac{D_j\mu}{B},$ the denominator is negative. Thus for the endemic equilibrium to exist in $\mathbb{R}^+$ the numerator must also be negative. We have two options:

Case 1. $D_j\beta\phi<AB.$ Then the numerator is always negative.

Case 2. $D_j\beta\phi>AB.$ The numerator is only negative if $\mu(D_j\beta\phi-AB)<B\beta\phi e_j^*.$ Indeed,
$$\mu(D_j\beta\phi-AB)<B\beta\phi e_j^*$$
$$\Rightarrow\frac{D_j\mu}{B}-\frac{\mu A}{\beta\phi}<e_j^*<\frac{D_j\mu}{B}$$
This always holds if $e_j^*<\frac{D_j\mu}{B},$ thus under this condition, the endemic equilibrium always exists in $\mathbb{R}^+.$
\end{proof}

\begin{thm}
If $D_j\beta\phi>AB$ then $\mathcal{R_C}>1$
\end{thm}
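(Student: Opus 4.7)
The plan is to bound $\mathcal{R_C}$ from below directly using the explicit formula derived earlier, and then invoke the hypothesis $D_j\beta\phi > AB$ in the final step. Rewriting $\mathcal{R_C}$ in the $D,A,B$ notation (using $(\delta_1-\delta_2)^2 = (D_1-D_2)^2$) gives
\[
\mathcal{R_C} \;=\; \frac{\beta\phi\!\left(D_1 + D_2 + \sqrt{4\alpha^2 D_1 D_2 + (D_1-D_2)^2}\right)}{2AB}.
\]
The goal is then to extract a lower bound that contains a clean factor of $D_j$.

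First I would note that the term under the square root satisfies the trivial but essential inequality
\[
\sqrt{4\alpha^2 D_1 D_2 + (D_1-D_2)^2} \;\geq\; \sqrt{(D_1-D_2)^2} \;=\; |D_1-D_2|,
\]
which holds because $\alpha \geq 0$ and both $D_1, D_2 \geq 0$ (since $\delta_i \in [0,1]$). Next I would apply the elementary identity $x+y+|x-y| = 2\max(x,y)$ with $x=D_1$, $y=D_2$, giving
\[
D_1 + D_2 + \sqrt{4\alpha^2 D_1 D_2 + (D_1-D_2)^2} \;\geq\; 2\max(D_1, D_2) \;\geq\; 2 D_j.
\]
Substituting this into the expression for $\mathcal{R_C}$ yields
\[
\mathcal{R_C} \;\geq\; \frac{\beta\phi \cdot 2D_j}{2AB} \;=\; \frac{D_j\beta\phi}{AB},
\]
and the hypothesis $D_j\beta\phi > AB$ immediately gives $\mathcal{R_C} > 1$, completing the argument.

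There is essentially no hard step here; the only subtlety is recognizing that the square root can be bounded below by $|D_1-D_2|$ rather than attempting to keep the $\alpha$ contribution, which would complicate matters unnecessarily. The max-identity trick is what produces exactly the $2D_j$ factor that pairs with the hypothesis. A brief sanity check that $\alpha, D_1, D_2 \geq 0$ is warranted, but follows from the biological constraints $\delta_i\in[0,1]$ and $\alpha=\rho/(1-\rho)\geq 0$ built into the rescaling.
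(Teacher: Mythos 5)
Your proof is correct, and it takes a slightly different --- and in fact tighter --- route than the paper's. The paper's own argument keeps the square root aside and bounds the remaining piece by asserting $\beta\phi(D_1+D_2) > 2AB$, writing
\[
\mathcal{R_C} \;>\; \frac{2AB+\beta\phi\sqrt{(D_2-D_1)^2+4\alpha^2D_1D_2}}{2AB} \;>\; 1.
\]
That first step really uses $D_1\beta\phi>AB$ \emph{and} $D_2\beta\phi>AB$, so as written it reads the hypothesis as holding for both indices (or is simply loose about which $j$ is meant). Your version instead discards the $4\alpha^2D_1D_2$ term under the root, invokes $x+y+|x-y|=2\max(x,y)$ to get $D_1+D_2+\sqrt{(D_1-D_2)^2+4\alpha^2D_1D_2}\geq 2\max(D_1,D_2)\geq 2D_j$, and concludes $\mathcal{R_C}\geq D_j\beta\phi/(AB)>1$. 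This proves the stronger statement that the condition need only hold for \emph{some} $j$, which is also the reading most consistent with how Theorems 1 and 2 use $D_j$ for a single patch. The only cosmetic point: your chain is $\mathcal{R_C}\geq D_j\beta\phi/(AB)$ followed by a strict inequality from the hypothesis, so the final conclusion is indeed strict; it is worth stating that explicitly so the reader does not worry about where strictness enters.
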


\begin{proof}
Recall that 
$$\mathcal{R_C}=\frac{\beta\phi\left((1-\delta_1)+(1-\delta_2)+\sqrt{(\delta_1-\delta_2)^2+4\alpha^2(1-\delta_1)(1-\delta_2)}\right)}{2(\mu+\gamma)(\mu+\phi)},$$
which, under the above algebraic simplifications becomes 
$$\mathcal{R_C}=\frac{\beta\phi\left(D_1+D_2+\sqrt{(D_2-D_1)^2+4\alpha^2D_1D_2}\right)}{2AB}.$$
If we have $D_j\beta\phi>AB$ then 
$$\mathcal{R_C}=\frac{\beta\phi\left(D_1+D_2+\sqrt{(D_2-D_1)^2+4\alpha^2D_1D_2}\right)}{2AB}>\frac{2AB+\beta\phi\sqrt{(D_2-D_1)^2+4\alpha^2D_1D_2}}{2AB}>1.$$
Thus if $D_j\beta\phi>AB,$ $\mathcal{R_C}>1.$
\end{proof}

\begin{thm}
If $D_1=D_2\equiv D$ then $D<\frac{AB}{\beta\phi}\Rightarrow \mathcal{R_C}<1+\alpha$ and $D>\frac{AB}{\beta\phi}\Rightarrow \mathcal{R_C}>1.$
\end{thm}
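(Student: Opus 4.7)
The plan is to reduce everything to a single clean algebraic identity for $\mathcal{R_C}$ in the symmetric case $D_1=D_2\equiv D$, from which both implications become one-line consequences of the hypothesized inequality on $D$.

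First, I would substitute $D_1=D_2=D$ directly into the formula
\[
\mathcal{R_C}=\frac{\beta\phi\bigl(D_1+D_2+\sqrt{(D_2-D_1)^2+4\alpha^2 D_1 D_2}\bigr)}{2AB}
\]
and observe that the radicand collapses: $(D_2-D_1)^2+4\alpha^2 D_1 D_2 = 4\alpha^2 D^2$. Since $\alpha\ge 0$ and $D\ge 0$ by their biological interpretations (recall $\alpha=\rho/(1-\rho)$ with $\rho\in[0,\tfrac12)$ and $D=1-\delta\in[0,1]$), taking the positive square root yields $\sqrt{4\alpha^2 D^2}=2\alpha D$. The expression therefore simplifies to
\[
\mathcal{R_C}=\frac{\beta\phi\,(2D+2\alpha D)}{2AB}=\frac{D\beta\phi\,(1+\alpha)}{AB}.
\]

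With this closed form in hand, both implications follow by multiplying the hypothesis $D\lessgtr\tfrac{AB}{\beta\phi}$ by the positive quantity $\tfrac{\beta\phi(1+\alpha)}{AB}$. In the first case, $D<\tfrac{AB}{\beta\phi}$ gives directly $\mathcal{R_C}<1+\alpha$. In the second case, $D>\tfrac{AB}{\beta\phi}$ gives $\mathcal{R_C}>1+\alpha$, and since $\alpha\ge 0$ we conclude $\mathcal{R_C}>1$, which is the desired weaker conclusion.

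Honestly, there is no real obstacle here: the theorem is an immediate corollary of the fact that the ugly square root vanishes in the symmetric case. The only point that warrants any care is justifying that $\sqrt{4\alpha^2 D^2}=2\alpha D$ rather than $2|\alpha D|$, which just requires noting the sign constraints on $\alpha$ and $D$ inherited from the model parameters. I would make that sign remark explicit once at the start of the computation and then let the two implications fall out as arithmetic.
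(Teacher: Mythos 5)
Your proposal is correct and follows essentially the same route as the paper: substitute $D_1=D_2=D$ to collapse the radical to $2\alpha D$, obtain $\mathcal{R_C}=\frac{\beta\phi}{AB}D(1+\alpha)$, and multiply the hypothesis on $D$ by the positive factor $\frac{\beta\phi(1+\alpha)}{AB}$, concluding $\mathcal{R_C}>1+\alpha\geq 1$ in the second case. Your explicit remark on the sign of $\alpha D$ is a minor point of extra care the paper omits, but the argument is otherwise identical.
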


\begin{proof}
When $D_1=D_2\equiv D$ we can show that 

\begin{eqnarray*}
\mathcal{R_C}&=&\frac{\beta\phi\left(D_1+D_2+\sqrt{(D_2-D_1)^2+4\alpha^2D_1D_2}\right)}{2AB}\\
&=&\frac{\beta\phi\left(2D+\sqrt{4\alpha^2D^2}\right)}{2AB}=\frac{\beta\phi\left(D+\alpha D\right)}{AB}=\frac{\beta\phi}{AB}D(1+\alpha).
\end{eqnarray*}
If $D<\frac{AB}{\beta\phi}$ then 
$$\mathcal{R_C}=\frac{\beta\phi}{AB}D(1+\alpha)<\frac{\beta\phi}{AB}\frac{AB}{\beta\phi}(1+\alpha)=1+\alpha.$$
If $D>\frac{AB}{\beta\phi}$ then 
$$\mathcal{R_C}=\frac{\beta\phi}{AB}D(1+\alpha)>\frac{\beta\phi}{AB}\frac{AB}{\beta\phi}(1+\alpha)=1+\alpha\geq1.$$
\end{proof}

\begin{cor}
If $D_1=D_2\equiv D$ then $D<\frac{AB}{\beta\phi(1+\alpha)}=\left(\frac{1}{\mathcal{R}_0}\right)\Rightarrow\mathcal{R_C}<1$
\end{cor}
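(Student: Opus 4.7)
The plan is to derive this directly from Theorem~4 without any new machinery; the corollary is essentially a rearrangement of the closed-form expression for $\mathcal{R_C}$ obtained there under the symmetric assumption $D_1 = D_2 \equiv D$. So the work is purely algebraic.

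First I would recall from Theorem~4 that when $D_1 = D_2 \equiv D$, the square root in $\mathcal{R_C}$ collapses, yielding the simple identity
$$\mathcal{R_C} \;=\; \frac{\beta\phi}{AB}\,D(1+\alpha).$$
Second, I would observe that the hypothesis $D < \frac{AB}{\beta\phi(1+\alpha)}$ is equivalent to $\frac{\beta\phi (1+\alpha)}{AB}\,D < 1$, since $A,B,\beta,\phi,(1+\alpha)$ are all positive. Third, multiplying through and comparing with the formula above gives $\mathcal{R_C} < 1$ immediately. Finally, I would note that the stated equality $\frac{AB}{\beta\phi(1+\alpha)} = \frac{1}{\mathcal{R}_0}$ follows directly from the definition $\mathcal{R}_0 = \frac{\beta\phi(1+\alpha)}{AB}$ given earlier in the Analysis section, so the hypothesis can be equivalently phrased as $D\,\mathcal{R}_0 < 1$.

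There is essentially no obstacle here, and I would not expect to encounter any difficulty: the corollary is a one-line substitution into the Theorem~4 expression for $\mathcal{R_C}$. If anything, the only thing worth flagging for the reader is the interpretation: the quantity $D = 1-\delta$ plays the role of the effective susceptible fraction at the disease-free equilibrium, so the condition $D < 1/\mathcal{R}_0$ is the familiar herd-immunity threshold expressed in this notation, and the corollary simply says that this threshold continues to characterize DFE stability in the symmetric two-patch model (combined with the stability statement at the end of Section 5.1).
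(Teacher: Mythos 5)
Your proof is correct and matches the paper's intended argument: the corollary is left unproved in the paper precisely because it follows by the one-line substitution of $D<\frac{AB}{\beta\phi(1+\alpha)}$ into the identity $\mathcal{R_C}=\frac{\beta\phi}{AB}D(1+\alpha)$ established in the proof of Theorem~4, together with the definition $\mathcal{R}_0=\frac{\beta\phi(1+\alpha)}{AB}$. Your additional remark identifying the condition as the herd-immunity threshold $D\,\mathcal{R}_0<1$ is consistent with the paper's later discussion in the section on $\mathcal{R_C}$ and vaccination heterogeneity.
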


\subsubsection{Stability of the Endemic Equilibria}
Numerical methods and algorithms derived from the above theorems allowed for both the existence and stability of the endemic equilibrium to be calculated.\\

\begin{figure}[!h]
\centering
\includegraphics[scale=0.2]{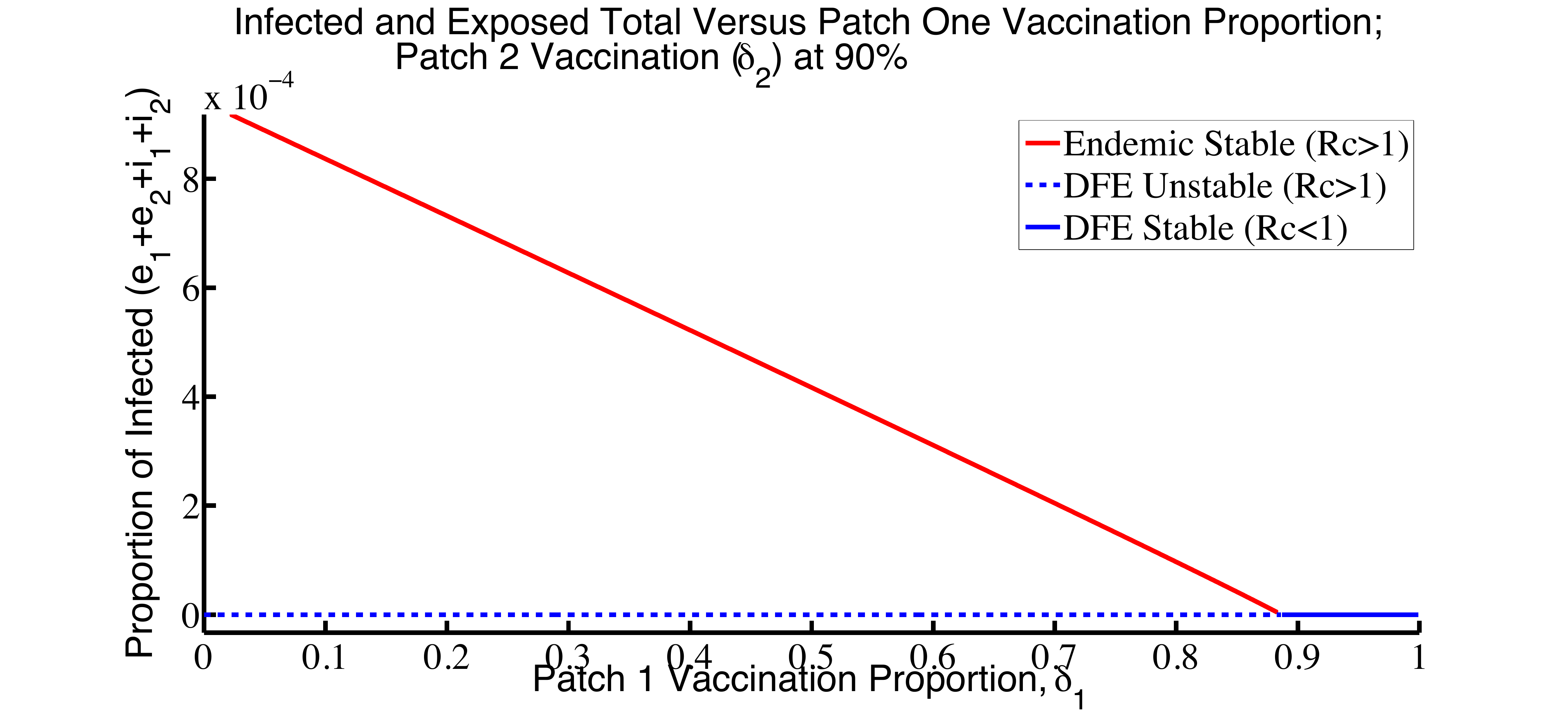}
\caption{Bifurcation diagram showing the existence and stability of the endemic equilibria as a function of $\delta_1$ with all other parameters fixed.}
\end{figure}

While there is not an explicit closed form solution, it is possible to determine existence and stability numerically given some parameter space.\\

\subsection{$\mathcal{R_C}$ and Vaccination Heterogeneity}
We found that $\mathcal{R_C}$ in the case of two patches with heterogeneous vaccination coverage is greater than or equal to $\mathcal{R_C}$ with homogeneous vaccination coverage.   Using the previously defined substitutions, $\mathcal{R_C}$ given $D_1 \neq D_2$ is greater than or equal to $\mathcal{R_C}$ for two patches with the same vaccination coverage, $D_*:=\frac{D_1+D_2}{2}.$ This means that for a given average vaccination proportion for an entire population, having heterogeneous vaccination coverage increases $\mathcal{R_C}.$ It is now necessary to demonstrate the following inequality:\\

$$\mathcal{R_C}\left(D_1,D_2 | D_1\neq D_2\right) \geq \mathcal{R_C}\left(\frac{D_1+D_2}{2},\frac{D_1+D_2}{2}\right).$$\\
\\
For $\mathcal{R_C}\left(D_1,D_2 | D_1\neq D_2\right),$ we have $$\mathcal{R_C} = \displaystyle \frac{\beta \phi (D_1+D_2+\sqrt{(D_1-D_2)^2+4\alpha^2D_1D_2})}{2AB},$$ and for the case $\displaystyle\mathcal{R_C}\left(\frac{D_1+D_2}{2},\frac{D_1+D_2}{2}\right)$ $\mathcal{R_C}$ reduces to $$\mathcal{R_C} = \displaystyle \frac{\beta \phi (D_1+D_2)(1+\alpha)}{2AB}.$$ Therefore it is sufficient to show that $\sqrt{(D_1-D_2)^2+4\alpha^2D_1D_2}) \geq \alpha(D_1+D_2).$ \\
\\
By definition, we have $0 \leq \alpha \leq 1$ and $(D_1-D_2)^2 \geq 0,$ so then $(1-\alpha^2)(D_1-D_2)^2 \geq 0.$ By expanding the term completely and moving all $\alpha^2$ terms to one side, we have:

$$D_1^2-2D_1D_2+D_2^2 \geq \alpha^2D_1^2 - 2\alpha^2D_1D_2 + \alpha^2D_2^2.$$
\\
Adding $4\alpha^2D_1D_2$ to both sides gives:
$$D_1^2-2D_1D_2+D_2^2+4\alpha^2D_1D_2 \geq \alpha^2D_1^2 + 2\alpha^2D_1D_2 + \alpha^2D_2^2,$$
\\
or
$$(D_1-D_2)^2+4\alpha^2D_1D_2 \geq \alpha^2(D_1+D_2)^2.$$
\\
The inequality is preserved if we take the square root of both sides. Then we have:
$$\sqrt{(D_1-D_2)^2+4\alpha^2D_1D_2} \geq \alpha(D_1+D_2).$$\\

\begin{figure}
\centering
\includegraphics[scale=0.2]{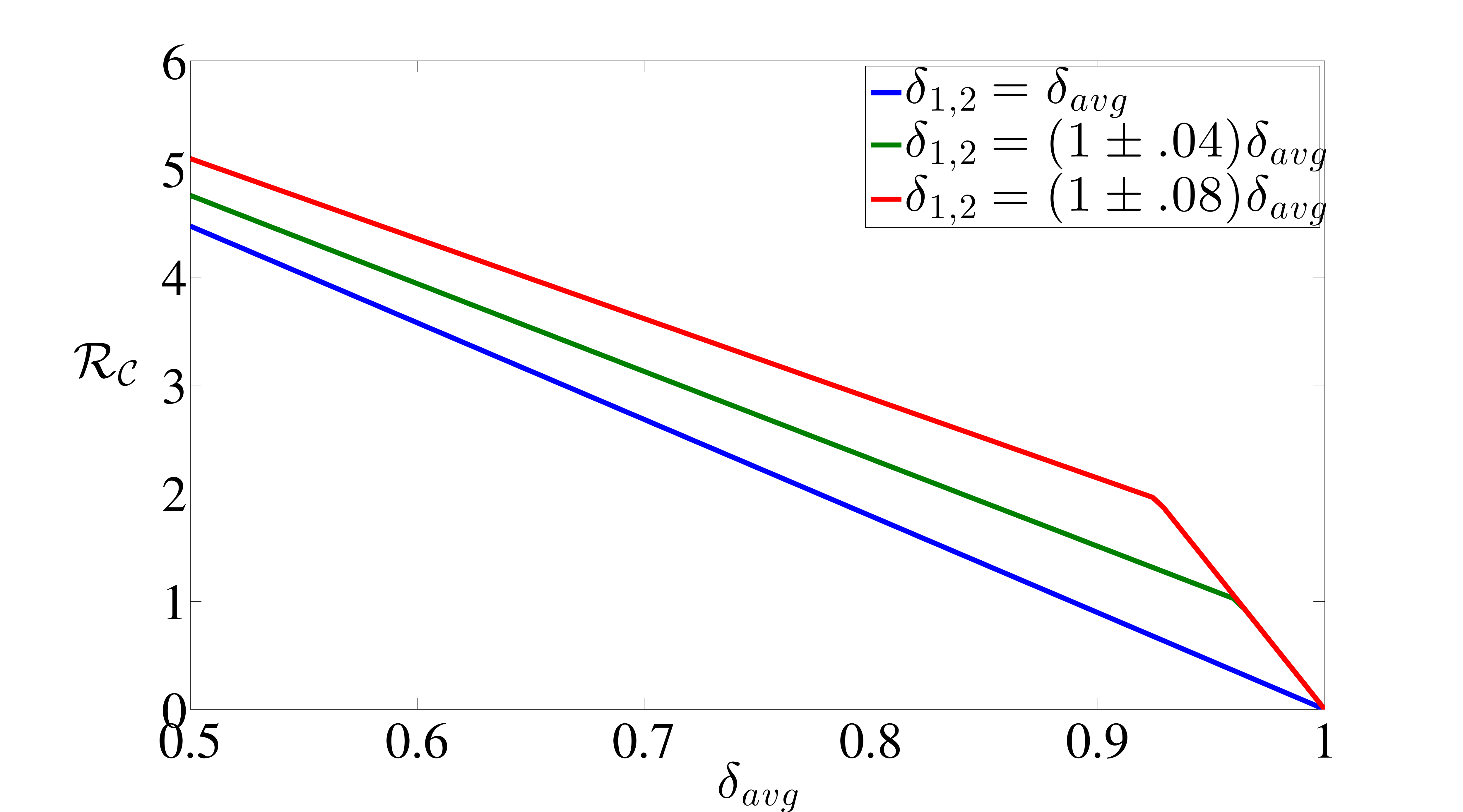}
\caption{Comparing two-patch $\mathcal{R_C}$ values in a homogenous case and two heterogeneous cases.  The deflection is such that for an average of 1, both patches must be 1 due to the constraints on $\delta$; all values converge along that line.}
\end{figure}

This completes the proof and shows that given an average vaccination coverage for a population, heterogeneous coverage causes $\mathcal{R_C}$ to be greater than or equal to homogeneous coverage.
By common definition, $\mathcal{R_C}$ is a control reproductive number given some intervention.  As such, it is worthwhile to determine what level of intervention is required in order to drive $\mathcal{R_C}<1.$  It was previously shown in the endemic equilibria analysis that if $D_1=D_2\equiv D$ then $D<\frac{AB}{\beta\phi(1+\alpha)}=\left(\frac{1}{\mathcal{R_C}}\right)\Rightarrow\mathcal{R_C}<1.$  Recalling that $D=1-\delta$ this gives:
\begin{center}
If $D_1=D_2\equiv D$ then $\delta>1-\frac{AB}{\beta\phi(1+\alpha)}=1-\frac{1}{\mathcal{R}_0}\Rightarrow\mathcal{R_C}<1.$
\end{center}
This gives necessary vaccination coverage to achieve stability of the disease free equilibrium.  However, it is more often the case the vaccine coverage is heterogeneous, therefore it is also useful to solve on vaccination proportion as a function of the other.  Given:
$$\mathcal{R_C} =\displaystyle \frac{\beta \phi \left((1-\delta_1)+(1-\delta_2) + \sqrt{4 \alpha^2 (1-\delta_1)(1-\delta_2) + (\delta_1 - \delta_2)^2}\right)}{2(\mu+\phi)(\mu+\gamma)},$$
we set:
$$\frac{\beta \phi \left((1-\delta_1)+(1-\delta_2) + \sqrt{4 \alpha^2 (1-\delta_1)(1-\delta_2) + (\delta_1 - \delta_2)^2}\right)}{2(\mu+\phi)(\mu+\gamma)}<1.$$
Now solve for some $D_i$ with the simplified form of $\mathcal{R_C}:$
\begin{eqnarray*}
	\mathcal{R_C}=\frac{\beta\phi\left(D_1+D_2+\sqrt{(D_2-D_1)^2+4\alpha^2D_1D_2}\right)}{2AB},\\
	\frac{\beta\phi\left(D_1+D_2+\sqrt{(D_2-D_1)^2+4\alpha^2D_1D_2}\right)}{2AB}<1,\\
	\beta\phi(D_1+D_2)+\beta\phi\sqrt{(D_2-D_1)^2+4\alpha^2D_1D_2}<2AB,\\
	\beta\phi\sqrt{(D_2-D_1)^2+4\alpha^2D_1D_2}<2AB-\beta\phi(D_1+D_2).\\
\end{eqnarray*}
It was previously shown that if $D_j\beta\phi>AB$ then $\mathcal{R_C}>1,$ therefore we know that the right side of the inequality must be positive given the assumption $\mathcal{R_C}<1.$  Continuing we have:
\begin{eqnarray*}
	(D_2-D_1)^2+4\alpha^2D_1D_2&<&\left(\frac{2AB}{\beta\phi}-(D_1+D_2)\right)^2,\\
	(D_2-D_1)^2+4\alpha^2D_1D_2&<&\left(\frac{2AB}{\beta\phi}\right)^2-\left(\frac{2AB}{\beta\phi}\right)(D_1+D_2)+(D_1+D_2)^2,\\
	(D_2-D_1)^2+4\alpha^2D_1D_2&<&\left(\frac{2AB}{\beta\phi}\right)^2-\left(\frac{2AB}{\beta\phi}\right)(D_1+D_2)+(D_1-D_2)^2 + 4D_1D_2,\\
	4\alpha^2D_1D_2-4D_1D_2&<&\left(\frac{2AB}{\beta\phi}\right)^2-\left(\frac{2AB}{\beta\phi}\right)D_1-\left(\frac{2AB}{\beta\phi}\right)D_2.\\
\end{eqnarray*}
At this point it becomes more useful to solve for some $D_i$ as solving $D_1,~D_2$ gives symmetric expressions in terms of the other.  Continuing gives:
\begin{eqnarray*}
	4\alpha^2D_iD_j-4D_iD_j&<&\left(\frac{2AB}{\beta\phi}\right)^2-\left(\frac{2AB}{\beta\phi}\right)D_i-\left(\frac{2AB}{\beta\phi}\right)D_j,\\
	D_i\left(4\alpha^2D_j-4D_j+\left(\frac{2AB}{\beta\phi}\right)\right)&<&\left(\frac{2AB}{\beta\phi}\right)^2-\left(\frac{2AB}{\beta\phi}\right)D_j,\\
	D_i&<&\frac{\left(\frac{2AB}{\beta\phi}\right)^2-\left(\frac{2AB}{\beta\phi}\right)D_j}{\left(4\alpha^2D_j-4D_j+\left(\frac{2AB}{\beta\phi}\right)\right)},\\
	D_i&<&\frac{AB(2AB-D_j\beta\phi)}{\beta\phi(AB+2\beta\phi D_j(\alpha^2-1))}.\\
\end{eqnarray*}
Substituting $D_i=1-\delta_i$ and $D_j=1-\delta_j$ allows us to solve for $\delta_i$:
$$\delta_i>1-\frac{AB(2AB-\beta\phi(1-\delta_j))}{\beta\phi(AB+2\beta\phi(\alpha^2-1)(1-\delta_j))}$$

As a result we have now derived the condition on $\delta_i$ to achieve $\mathcal{R_C}<1$ with $\delta_j$ and the other parameters fixed. So long as this condition holds, the DFE is stable.

\section{Sensitivity Analysis}

We want to study fixed point behaviour, i.e. disease-free and endemic equilibria with respect to parameter variations and $\mathcal{R}_0$. Considering that we will be working with time independent expressions, a forward sensitivity analysis will be sufficient.
The system of equations that represent the equilibria is shown below.

\begin{equation}
\centering
\begin{split}
- \beta s^*_1 (i^*_1 + \alpha i^*_2) - \mu s^*_1 + (1 - \delta_1)\mu &= 0 \\
 \beta s^*_1 (i^*_1 + \alpha i^*_2) - (\mu + \phi) e^*_1 &= 0 \\
 \phi e^*_1 - (\mu + \gamma)i^*_1 &= 0 \\
 - \beta s^*_2 (i^*_2 + \alpha i^*_1) - \mu s^*_2 + (1 - \delta_2)\mu &= 0 \\
 \beta s^*_2 (i^*_2 + \alpha i^*_1) - (\mu + \phi) e^*_2 &= 0 \\
 \phi e^*_2 - (\mu + \gamma)i^*_2 &= 0 \\
\end{split}
\label{eq:FixedPoints}
\end{equation}

The forward sensitivity problem is defined by

\begin{equation}
D_{\mathbf{u}^*} \cdot \frac{\partial \mathbf{u}^*}{\partial p} = - \nabla_p F
\label{eq:ForSP} 
\end{equation} 

Which can be solved multiplying both sides by $D_{\mathbf{u}^*}^{-1}$, given a nice enough Jacobian, thus giving

\begin{equation}
\frac{\partial \mathbf{u}^*}{\partial p} = - D_{\mathbf{u}^*}^{-1} \cdot \nabla_p F
\label{eq:ForSPsolved}
\end{equation}

Where,

\begin{equation}
D_{\mathbf{u}^*} = 
\begin{pmatrix}
- \beta (i_1^* + \alpha i_2^*) - \mu & 0 & -\beta s_1^* & 0 & 0 & -\beta \alpha s_1^* \\
\beta (i_1^* + \alpha i_2^*) & - (\mu + \phi) & \beta s_1^* & 0 & 0 & \beta \alpha s_1^* \\
0 & \phi & -(\mu + \gamma) & 0 & 0 & 0 \\
0 & 0 & -\beta \alpha s_2^* & -\beta (i_1^* \alpha + i_2^*) - \mu & 0 & -\beta s_2^* \\
0 & 0 & \beta \alpha s_2^* & \beta (i_1^* \alpha + i_2^*) & -(\mu + \phi) & \beta s_2^* \\
0 & 0 & 0 & 0 & \phi & -(\mu + \gamma) \\
\end{pmatrix}
\label{eq:Jacobian}
\end{equation}

\begin{equation}
\frac{\partial \mathbf{u}^*}{\partial p} =
\left(\frac{\partial s_1^*}{\partial p} \:,\: \frac{\partial e_1^*}{\partial p} \:,\: \frac{\partial i_1^*}{\partial p} \:,\: \frac{\partial s_2^*}{\partial p} \:,\: \frac{\partial e_2^*}{\partial p}  \:,\: \frac{\partial i_2^*}{\partial p}\right)^T \\
\label{eq:PartialDerVector}
\end{equation}

\begin{equation}
\begin{split}
\nabla_{\beta}F &=
\left(-s_1^*(i_1^* + \alpha i_2^*) \:,\: s_1^*(i_1^* + \alpha i_2^*) \:,\: 0 \:,\: -s_2^*(i_2^* + \alpha i_1^*) \:,\: s_2^*(i_2^* + \alpha i_1^*)  \:,\: 0 \right)^T \\
\nabla_{\alpha}F &=
\left(-s_1^* \alpha i_2^* \:,\: s_1^* \alpha i_2^* \:,\: 0 \:,\: -s_2^* \alpha i_1^* \:,\: s_2^* \alpha i_1^*  \:,\: 0 \right)^T \\
\nabla_{\mu}F &=
\left(-s_1^* + 1 - \delta_1 \:,\: -e_1^* \:,\: -i_1^* \:,\: -s_2^* + 1 - \delta_2 \:,\: -e_2^* \:,\: -i_2^*  \right)^T \\
\nabla_{\phi}F &=
\left( 0 \:,\: -e_1^* \:,\: e_1^* \:,\: 0 \:,\: -e_2^* \:,\: e_2^*  \right)^T \\
\nabla_{\gamma}F &=
\left( 0 \:,\: 0 \:,\: -i_1^* \:,\: 0 \:,\: 0 \:,\: -i_2^*  \right)^T \\
\nabla_{\delta_1}F &=
\left( -\mu \:,\: 0 \:,\: 0 \:,\: 0 \:,\: 0 \:,\: 0 \right)^T \\
\nabla_{\delta_2}F &=
\left( 0 \:,\: 0 \:,\: 0 \:,\: -\mu \:,\: 0 \:,\: 0 \right)^T \\
\end{split}
\label{eq:SetofPar_Derivatives}
\end{equation}

At the DFE,

\begin{equation}
D_{\mathbf{u}^*} = 
\begin{pmatrix}
- \beta - \mu & 0 & -\beta (1 - \delta_1) & 0 & 0 & -\beta \alpha (1 - \delta_1) \\
0 & - (\mu + \phi) & \beta (1 - \delta_1) & 0 & 0 & \beta \alpha (1 - \delta_1) \\
0 & \phi & -(\mu + \gamma) & 0 & 0 & 0 \\
0 & 0 & -\beta \alpha (1 - \delta_2) & -\beta - \mu & 0 & -\beta (1 - \delta_2) \\
0 & 0 & \beta \alpha (1 - \delta_2) & 0 & -(\mu + \phi) & \beta (1 - \delta_2) \\
0 & 0 & 0 & 0 & \phi & -(\mu + \gamma) \\
\end{pmatrix}
\label{eq:JacatDFE}
\end{equation}

The matrix shown above correspond to the Jacobian of the system (\ref{eq:FixedPoints}) at the DFE. Its inverse, which in this case can be computed without having a considerable error, substituted on expression (\ref{eq:ForSPsolved}), give the vector of partial derivatives of the states respect to the parameters.      

\subsection{Sensitivity Indices of DFE}

The expressions below are obtained by SI formula considering the solution of the FSP (\ref{eq:ForSPsolved}) for the DFE:

\begin{equation}
\begin{split}
S_{\delta_1} &= \frac{\delta_1}{\delta_1 - 1}, \\
S_{\delta_2} &= \frac{\delta_2}{\delta_2 - 1}.\\
\end{split}
\label{eq:Sens_for_DFE}
\end{equation}

These equations represent the effect of a change in vaccination coverage on the DFE system state. Which means that if we increase or decrease the vaccination coverage $\delta_i$ by $1 \%$, then the state value $s^*_i$ will be modified by a factor of $S_{\delta_i}$.
To represent this numerically we substitute $\delta_1 = 0.85 \:,\: \delta_2 = 0.87$, (estimated from \cite{francefrancefrancelol}), into the above expressions, giving:

\begin{equation}
\begin{split}
S_{\delta_1} &= -5.7, \\
S_{\delta_2} &= -6.7. \\
\end{split}
\label{eq:Sens_for_DFE:NV}
\end{equation}

If we analyze the relation between the modulus of each sensitivity index, we can conclude that this state is slightly more sensitive to $\delta_2$-variations. $|S_{\delta_2}|$ is $17\%$ greater than $|S_{\delta_1}|$. This result tells us that a variation of $1\%$ on the value of $\delta_2$ will affect the DFE system state $1.17$ times more than a $1\%$ variation of $\delta_1$. Thus the DFE is more sensitive to $\delta_2$ perturbations. 

\subsection{Sensitivities Indices on EE}

The endemic equilibrium state is obtained solving system (\ref{eq:FixedPoints}), for the condition $i_1^* \:,\: i_2^* \: \neq 0$. The solution is composed of two states, neglecting complex conjugates, where only one of them has biological meaning. Explicit forms of the sensitivity indices for the endemic will not be included, due to the length and complexity of the algebra. In future work, a numerical simulation for this result would endow the analysis with more meaningful interpretation.

\subsection{Sensitivity for $\mathcal{R}_0$}

We consider the importance of an accurate estimation of $\mathcal{R}_0$ (threshold for diseases outbreaks). We have proceeded to find the SI (Sensitivity Indices) for the basic reproductive number and have studied the dependence on the parameters.
The general results are shown below.

\begin{equation}
\centering
\scalemath{0.79}{
\begin{split}
S_{\phi} &= \frac{\mu}{\mu + \phi} \qquad
S_{\beta} = 1 \qquad
S_{\gamma} = -\frac{\gamma}{\mu + \gamma} \qquad
S_{\mu} = -\frac{\mu(2\mu + \phi + \gamma)}{(\gamma + \mu)(\mu + \phi)} \\
S_{\delta_1} &= \frac{\delta_1 \left[-1 + \frac{-2 \alpha^2 (1 - \delta_2) + \delta_1 - \delta_2}{\sqrt{4\alpha^2 (1-\delta_1)(1-\delta_2) + (\delta_1 - \delta_2)^2}}\right]}{2 - \delta_1 - \delta_2 + \sqrt{4\alpha^2(1- \delta_1)(1 - \delta_2) + (\delta_1 - \delta_2)^2}} \\
S_{\delta_2} &= -\frac{\delta_2 \left[1 + \frac{-2 \alpha^2 (1 - \delta_1) + \delta_1 - \delta_2}{\sqrt{4\alpha^2 (1-\delta_1)(1-\delta_2) + (\delta_1 - \delta_2)^2}}\right]}{2 - \delta_1 - \delta_2 + \sqrt{4\alpha^2(1- \delta_1)(1 - \delta_2) + (\delta_1 - \delta_2)^2}} \\
S_{\alpha} &= \frac{4 \alpha^2 (1 - \delta_1)(1 - \delta_2)}{\sqrt{4 \alpha^2 (1 - \delta_1)(1 - \delta_2) + (\delta_1 - \delta_2)^2}\left(2 - \delta_1 - \delta_2 + \sqrt{4 \alpha^2 (1 - \delta_1)(1 - \delta_2) + (\delta_1 - \delta_2)^2}\right)} \\
\end{split}
}
\label{eq:Sens_for_R_o}
\end{equation}

Observing the SI expressions (\ref{eq:Sens_for_R_o}) we can state that a strong dependence is focused on $\delta_1$, $\delta_2$, and $\alpha$. A variation on these and other parameters will converge in a potential misestimation of $\mathcal{R}_0$, depending on the perturbation direction. Because we estimated $\beta$ using $\mathcal{R}_0$, $\beta$ is in fact dependent on all other parameter values. For this reason, although the $\beta$-index appears to be constant in the above expression, in truth it has considerable dependence on the other parameters.

\section{Numerical Simulations}
\subsection{Two-Patch Simulation} 
For our two-patch numerical simulation, we assume the population in the Northern and Southern regions are equal, and we scale by 0.35 (which was estimated from CIA data, \cite{france_factbook}) to reflect that our model is concerned primarily with adolescents. We use the following numerical values for the parameters (See Appendix A for details on estimations): $\alpha = 0.02, \beta = 2.2, \gamma = 0.25, \phi = 0.125, \delta_1 = 0.88, \delta_2 = 0.85, \mu = 3\times10^{-5}.$\\

\begin{figure}[!htbp]
\centering
\includegraphics[width=\textwidth]{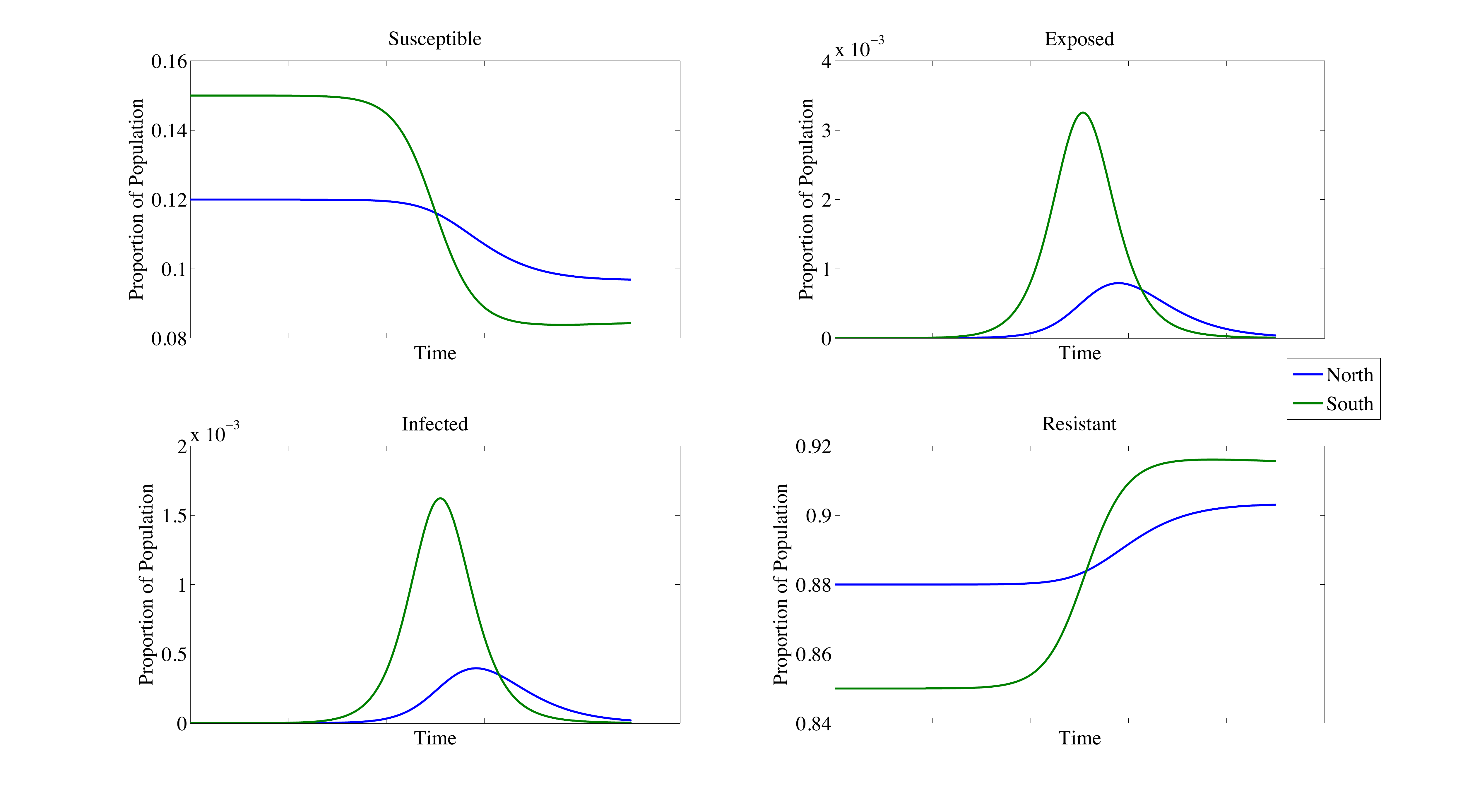}
\caption{Two Patches: Scaled Preferential Mixing \textit{(MATLAB Simulation)}}
\end{figure}

For reference, we also run simulations in which $\delta_1 = \delta_2 = 0.9 \text{ and } \alpha = 1.$ In these simulations, the parameters in both patches are identical, and $\alpha = 1$ means that individuals interact across patches as often as they interact within their home patch. We see that the system behaves as one large patch, which is as expected. In addition, one can observe that the final epidemic size is consistently higher in the homogenous system: 1.5 million infected in the homogeneous system as opposed to 1 million in the heterogeneous two-patch system.\\

\begin{figure}[!htbp]
\centering
\includegraphics[width=\textwidth]{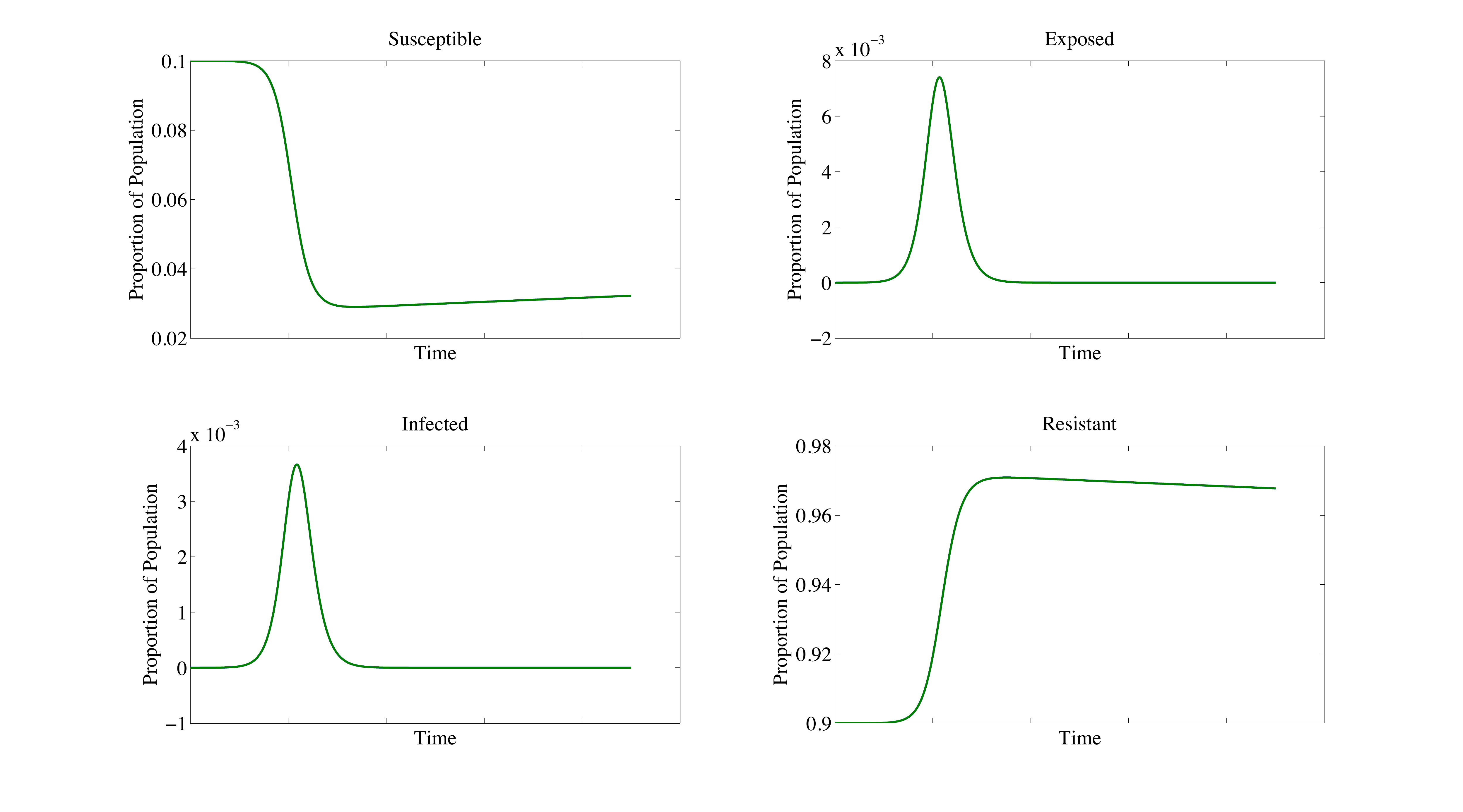}
\caption{Two Patches: parameters set to represent homogeneity \textit{(MATLAB Simulation)}}
\end{figure}

From the two-patch simulations it is apparent that there are substantial differences in the qualitative dynamics of the disease outbreak between the heterogeneous and homogeneous cases.
\subsection{Multi-Patch Simulation} 
For our numerical simulation of multiple patches, we divide France into six metropolitan regions, based on regional demographics. We shall use the following parameter estimates: \\

\begin{table}[htp]
\centering
\begin{tabular}{rccc}
	Patch&$N\times10^6$&$\delta$&$\pi$\\
	\hline
	Strasbourg & 5.82 & 0.855 & 0.9715\\
	Paris & 6.13 & 0.874 & 0.9370 \\
	Toulouse & 3.58 & 0.779 & 0.9628 \\
	Nantes & 2.54 & 0.836 & 0.9602 \\
	Marseille & 1.73 & 0.798 & 0.9613\\
	Lyon&2.17&0.855&0.9577\\
\end{tabular}
\caption{Parameters by Region}
\end{table}


\begin{figure}[!htbp] 
\centering
\includegraphics[width=\textwidth]{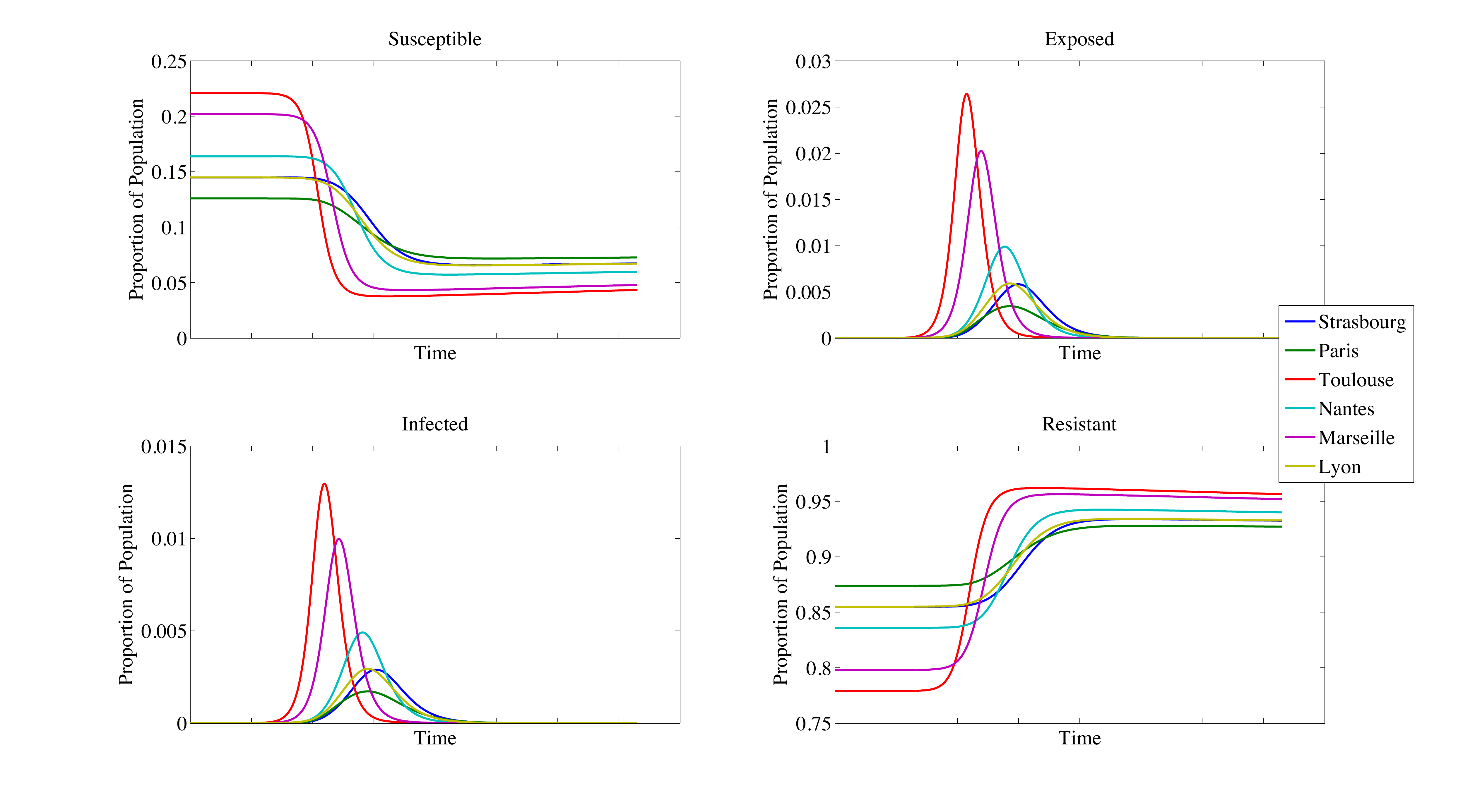}
\caption{Six Patches: Preferential Mixing \textit{(MATLAB Simulation)}}
\end{figure}

Once again, we run simulations with six patches to represent homogeneity. We set all varying parameters to the average: $\delta_i = 0.833, \pi_i = 0, N = 3.6$ million. One can see that the six patches behave as one large patch, and the final epidemic size is larger than with separate heterogeneous patches. \\

\begin{figure}[!htbp] 
\centering
\includegraphics[width=\textwidth]{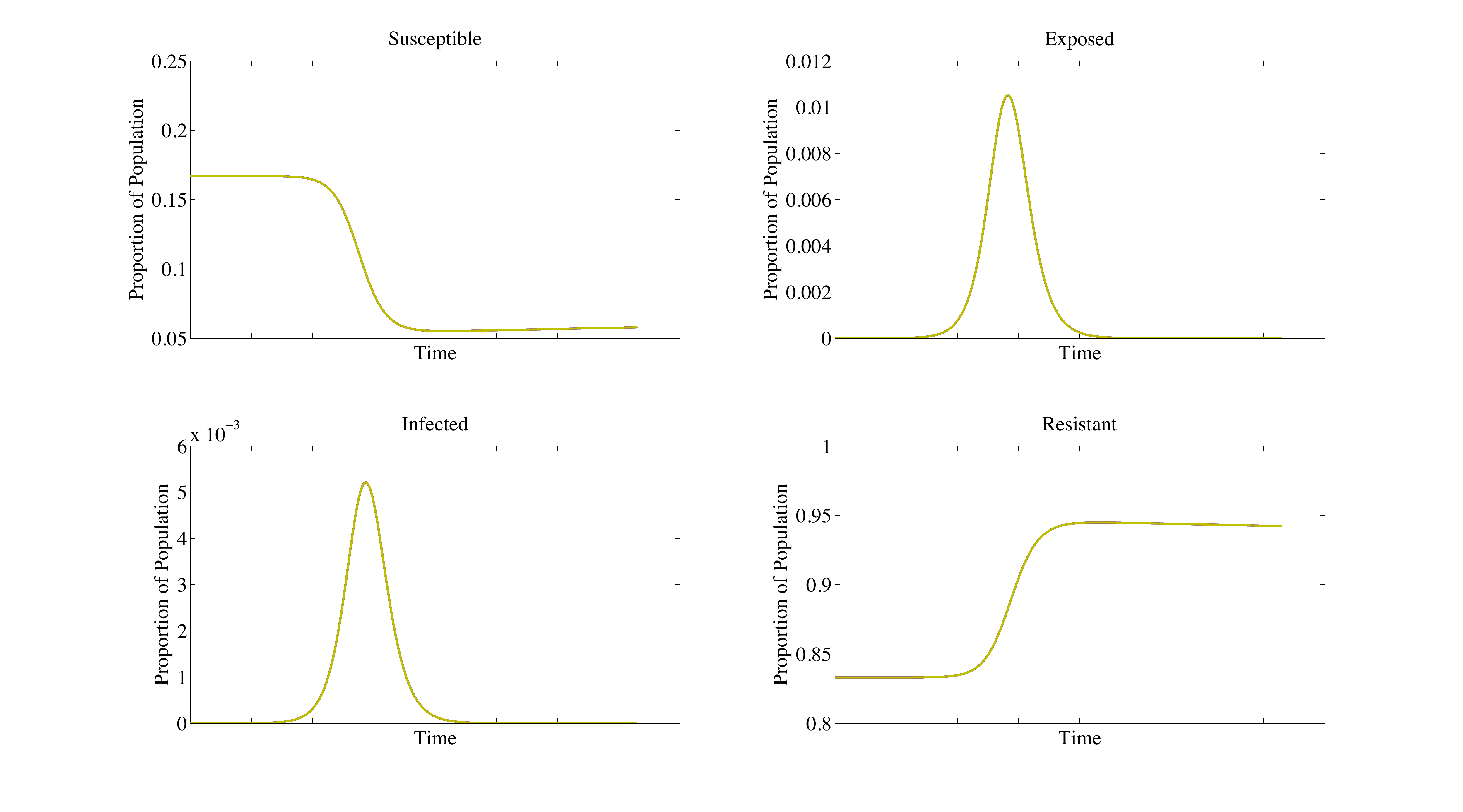}
\caption{Six Patches: parameters set to represent homogeneity \textit{(MATLAB Simulation)}}
\end{figure}

The six-patch simulation agrees with the two-patch in that qualitative dynamics are substantially different between the heterogeneous and homogeneous cases.\\
\break \section{Results and Discussion}
\subsection{Results}

In the analysis of $\mathcal{R_C}$ for the two patch case, it was demonstrated that heterogeneous vaccine coverage increases $\mathcal{R_C}$ in relation to comparable homogeneous coverage. Because the disease-free equilibrium is stable given $\mathcal{R_C}<1,$ one can see that heterogeneous coverage makes a stable disease-free equilibrium less likely. Functionally this means that in addition to vaccination coverage within each region, $\mathcal{R_C}$ depends on the difference in vaccination coverage across regions. Expressions have also been determined for when $\delta_1=\delta_2$ and $\delta_1 \neq \delta_2$ that allow the computation of necessary vaccination coverages to drive $\mathcal{R_C}<1.$  This essentially defines the necessary levels to achieve herd immunity of the population as a whole in the model.\\
\\
For the endemic case, the method of substitution was used to solve for the endemic equilibrium $(s_1^*,e_1^*,i_1^*,s_2^*,e_2^*,i_2^*)$ implicitly, such that all fixed points were expressions in
$e_1^*$ and $e_2^*.$ It was then determined that $e_i^* < \frac{\mu (1-\delta_i)}{\mu+\phi},$ which follows directly from the requirement that the endemic equilibrium be positive and finite. This condition is the greatest upper bound, because as $e_i^*$ approaches this bound, $e_j^* \rightarrow \infty.$  Numerical methods were also used to evaluate the existence and stability of the endemic equilibria.\\
\\
From a forward sensitivity analysis of $\mathcal{R_C}$ in our scaled preferential mixing model, we found that $\mathcal{R_C}$ is most sensitive to perturbations in $\delta_1$ and $\delta_2.$ Using our parameter estimates, (see Section 6. Numerical Simulations), we find that the sensitivity indices themselves are strongly influenced by the $\delta$ parameters, and $\mathcal{R_C}$ is inversely sensitive to both $\delta_1$ and $\delta_2.$ The sensitivity indices $S_{\delta_1}$ and $S_{\delta_2}$ will be large and negative when $\delta_1$ and $\delta_2$ are large and similar, and $S_{\delta_1}$ and $S_{\delta_2}$ will be small and negative when $\delta_1$ and $\delta_2$ are small and similar. For discrepancies between $\delta_1$ and $\delta_2,$ we generally find that $S_{\delta_i}$ is large and negative while $S_{\delta_j}$ is small and negative for $\delta_i < \delta_j.$ So $\mathcal{R_C}$ is largely inversely sensitive to the lower vaccination rate, see Table 1.

\begin{table}[h!]
\centering
\begin{tabular}{cccc|l}
$\delta_1$&$\delta_2$&$S_{\delta_1}$&$S_{\delta_2}$& \\
\hline
0.88 & 0.85 & -0.05 & -5.60 & our estimated values\\
0.90 & 0.90 & -4.50 & -4.32 & two high values\\
0.8. & 0.8. & -2.00 & -1.92 & two low values\\
0.87 & 0.86 & -0.40 & -5.72 & small discrepancy, $\delta_1>\delta_2$\\
0.86 & 0.87 & -5.78 & -0.34 & small discrepancy, $\delta_1<\delta_2$\\
0.90 & 0.83 & -0.01 & -4.87 & large discrepancy, $\delta_1>\delta_2$\\
0.83 & 0.9. & -4.88 & -0.00 & large discrepancy, $\delta_1<\delta_2$\\
\end{tabular}
\caption{Variations of $\delta_1$ and $\delta_2$}
\end{table}

\subsection{Discussion}
Our findings suggest that heterogeneous MMR vaccination coverage within France does in fact increase the transmissibility of measles and contribute to the likelihood of outbreaks. Our model showed that $\mathcal{R_C}$ increased to a larger value when the disparity in vaccination coverage between regions widened, even when the average vaccination coverage between the two patches remained the same. It thus follows that health policy officials should first focus their immunization efforts primarily on districts with the lowest vaccination coverage rates when attempting to eradicate the disease at a regional level. This will weaken the impact that the lowest covered districts have on moderately covered regions and will also diminish the diseases communicability throughout surrounding populations. \\
\\
In addition, it is worth noting that while narrowing the gap between vaccination coverage percentages across regions will help decrease virus transmission, this should not be the sole strategy implemented when considering national vaccination coverage. The optimal average immunization coverage should still be achieved in order to eliminate the possibility of another measles epidemic. 

\subsection{Future Research}
In this project many biological aspects of the model were neglected for the sake of simplicity. For example, an age-structured model was not considered in this report, instead it was assumed that all individuals were equally susceptible to the disease regardless of age. However, this type of model would have represented the disease dynamics in a more accurate light since children under 1 year of age are more likely to contract the measles virus than any other age group \cite{8586668720130301}. Additionally, the epidemic in France from 2008-2011 showed a dramatic increase in the rate of young adults contracting the disease; the median age during the third outbreak was 16 years \cite{8586668720130301}. In addition, vaccine efficacy was considered at the outset by taking the product of estimated coverages and efficacy as the effective vaccine rate rather than having efficacy degrade over time as in \cite{gumel}.  Both of these modeling decision are likely to have a relevant impact on the dynamics of the disease and would be important considerations in the future.

\pagebreak
\begin{appendices}
\section{Parameter Calculation} 
\subsection{$N_i$ - population size}
The overall population for France is easily found from national demographics \cite{france_factbook}.  In mathematical analysis and numerical simulation of the two-patch model, the population sizes were assumed equal.  Therefore we simply scaled the total population to achieve an effective population given the significantly higher prevalence in younger age groups and divided it in half.  For the multi-patch model, scaling was again done to achieve an effective population and regional demographic data was used to divide the population among patches.
\subsection{$\mu$ - birth and death rate}
For the sake of simplicity, it was assumed that the population constant.  The birth and death rate were then necessarily equal.  It was decided to use an estimation of the birth rate to represent $\mu.$  (France has a birth rate $>$ death rate which motivated this decision \cite{france_factbook}).  The dynamics of a non-constant population are different than that of a constant, but in having the demographic modelling rate be higher a closer approximation can be reached.  From the World Factbook maintained by the CIA we found that France has $\approx12.5~\mbox{births}/1000$ residents annually \cite{france_factbook}.  This implies that $\mu=\frac{12.5/1000}{365}\approx0.00003.$
\subsection{$\phi$ - disease incubation rate}
The duration of the exposed or incubation period is vital in modelling disease with such latent periods. With any disease though, exact time frames of all stages of the disease vary.  The CDC lists the incubation period to be between one and three weeks \cite{CDC_webpage}.  However they also state that the contagious period begins 4 days prior to the characteristic rash with the rash generally appearing at 14 day post exposure \cite{CDC_webpage}.  This would suggest the latent period to be between 7 and 10 days.  Research literature that focuses on modelling measles \cite{hooker2010parameterizing, keeling2002understanding} suggests that 8 days is optimal for replicating disease dynamics.  Using these figures one can estimate $\phi$ by considering $\frac{1}{\phi}=\mbox{average latent period}.$  This approximates the range $\phi\approx[0.1,0.14].$
\subsection{$\gamma$ - recovery rate}
The infectious period is another critical parameters to determine.  Epidemiological data suggests the infectious period can range from 4 to 10 days, with the CDC stating the average is about 8 \cite{CDC_webpage}.  Again turning to modelling literature, the estimates are often on the low end of this range \cite{keeling2002understanding, hooker2010parameterizing}.  This is due to the partial isolation that occurs from highly acute measles infection.  Calculating $\gamma$ then proceeds as for $\phi$ with $\frac{1}{\gamma}=\mbox{average infective period}$ giving the range $\gamma\approx[0.17,0.25].$ 
\subsection{$\delta_i$ - vaccination coverage}
Vaccination coverage and disparity coverage is the driving motivation of this project, it is therefore important to ensure accurate estimations.  France's average vaccination coverages fluctuates around $90\%,$ however it becomes necessary to consider both regional coverage and vaccine efficacy in calculating $\delta_i$ accurately.  Literature provides a good basis for regional vaccination approximation \cite{8586668720130301} and the only thing to then consider is vaccine efficacy.  It is know that any vaccination program fails to be $100\%$ effective, this can be represented differently.  In our model the simplifying assumption was made that if the vaccine proves to be effective ones transitions from natural immunity to vaccinated immunity with no interruption of protection, thus people are "born" into either susceptible or resistant classes.  To continue maintaining the simplicity of the model, we decided to multiply the vaccination coverage by the estimated vaccine efficacy and use this as $\delta_i.$  Literature estimates of efficacy vary widely \cite{bouhour2012survey, akramuzzaman2002measles}, at times spanning from $80\%$ to as high as $98\%.$  The lower end of the range generally occurs in third world countries where quality of health care is lower in in particular cases when only a single dose of vaccine is administered and is given too early to be fully effective.  The high end of the range is in the case where two MMR doses are administered as recommended.  Most of the literature focusing on efficacy in Europe \cite{bouhour2012survey, mossong2000estimation, eichner2002estimation} suggest an accurate range to fall between $92\%$ and $96\%$ nationwide.  We opted to consider efficacy as $95\%$ with coverage between $84\%$ and $94\%.$  This gives a reasonable range as $\delta=\mbox{coverage}*\mbox{efficacy}=[0.798,0.893].$
\subsection{$\pi$ - inter-patch mixing proportions}
A necessary parameter to consider for the preferential mixing is $\pi.$  This parameter represents the proportion of individuals from a given patch that mix with the entire population, and was a difficult parameter to estimate.  The people-days concept became important in its computation.  Colloquially we state that one people-day is the time, in days, spent by a person in some place.  Therefore it can be considered a measure of activity.  It becomes spatially relevant when considering if a person is on their home patch or traveling.  Our application of this concept is defined by considering the mean travel time and number of travelers.  This allows the calculation of the proportion of people-days that are spent elsewhere by the inhabitants of a given patch.  While the model does not explicitly represent this movement, we use it as an estimation of the proportion of a given patch that is mixing with other patches, namely:

\begin{equation}
1  \, \mbox{people-day} = \: \mbox{person} * \mbox{day}.
\end{equation}      

Once defined thus, we are able to calculate the proportion of travellers in a patch.  In terms of people-days, this becomes equitable to the amount of activity or contact a patch can have with other patches, as follows:

\begin{equation}
\mbox{Proportion of patch travelling} = \frac{\# \mbox{Travellers} \: * \: \mbox{Mean Travel Time}}{\mbox{Total Patch Population} *\mbox{365 days}}.
\end{equation}

This proportion of patch travelling became the foundation for the $\pi$ parameter.  It was adjusted slightly to account for the lower likelihood of travel and thus between patch interaction of persons in the birth - 20 year old age range which comprised the majority of measles cases.  Our final range in the two-patch was $\pi\approx[0.96,0.98]$ and in the multi-patch system was $\pi\approx[0.94,0.99].$  The following table shows the necessary data for completing such calculations, it was found in \cite{Memento}.

\begin{table}[h!]
\scalebox{0.7}{
\begin{tabular}{l|l|c|c|c}
Regions of France & Capital & \# Travellers [millions] & Mean Travel Time [Days] & \# Population [millions] \\
\hline
Alsace & Strasbourg & 3861 & 5 & 1856 \\
Aquitaine & Bordeaux & 8798 & 4.5 & 3227 \\
Auvergne	 & Clermont-Ferran & 3738 &	4.4 & 1345 \\ 
Bourgogne & Dijon & 4680 & 4.8 & 1643 \\
Bretagne	 & Rennes & 7820 & 4.8 & 3195 \\
Centre & Orleans & 7118 & 4.7 & 2545 \\
Champagne-Ardenne & & 3352 & 5.2 & 1334 \\	
Corse & Ajaccio & 624 & 5 & 311 \\
Franche-Comt\'e & Besan\c{c}on & 2787 & 5.2 & 1173 \\		
\^Ile-de-France & Paris & 43039 & 5.1 & 11798 \\
Languedoc-Roussillon & Montpellier & 6392 & 6.3 & 2633 \\
Limousin & Limoges & 2223 & 4.9 & 746 \\		
Lorraine & Metz & 5110 & 4.3	 & 2350 \\
Midi-Pyr\'en\'ees & Toulouse & 8529 & 5.1 & 2893 \\ 
Nord-Pas-de-Calais & Lille & 6796 & 4.6 & 4026 \\	
Basse-Normandie & Caen & 3767 & 6.8 & 1474 \\
Haute-Normandie & Rouen & 3998 & 4.5 & 1833 \\
Pays de la Loire & Nantes & 10993 & 5.8 & 3565 \\
Picardie	 & Amiens & 4039 & 4.7 & 1914 \\
Poitou-Charentes & Poitiers & 5601 & 4.9 & 1774 \\ 	
Provence-Alpes-C\^ote d'Azur & Marseille & 12697 & 5.5	& 4951 \\
Rh\^one-Alpes & Lyon & 19967 & 4.8 & 6212 \\
\end{tabular}
}
\label{TAB:FranceTravellers}
\caption{\footnotesize{Breakdown of annual travel and associated travel time originating from the regions of France.  Note that number of travellers is not individuals but rather the sum of all records showing a traveller originating from the given region.}}
\end{table}

\subsection{$\beta$ - scaled successful contact rate}
Being a scaled parameter, $\beta$ does not have as precise a biological interpretation as normal contact or transmission rates.  However, it is not entirely incorrect to think of as successful contact rate rescaled to account for the particulars of preferential mixing.  Given its somewhat ambiguous nature however, it is challenging to estimate it \emph{a priori} from any data or literature.  We were able to deduce it from our expression for $\mathcal{R}_0$ however.  Literature says that $\mathcal{R}_0$ can vary from 6 to 45 for measles.  However most previous modelling suggests 10 to 20 being a reasonable range.  Starting from our $\mathcal{R}$ we assume no vaccination and arrive at $\mathcal{R}_0,$ as follows:
\begin{center}
$$\mathcal{R}_0 = \displaystyle \frac{\beta \phi (1+\alpha)}{(\mu+\phi)(\mu+\gamma)}.$$
\end{center}
Solving for $\beta$ one obtains
\begin{center}
$$\beta= \frac{\mathcal{R}_0 (\mu+\phi)(\mu+\gamma)}{\phi (1+\alpha)}.$$
\end{center}
Substitute in $\mathcal{R}_0=10$ and $\mathcal{R}_0=20$ gives the range $\beta\approx[1.1,2.2].$

\subsection{$\alpha$ - scaled mixing parameter}
In the process of rescaling the two-patch preferential mixing model a new parameter $\alpha$ was generated.  This parameter can be found analytically using the previously determined values for $\pi$ and the definitions of $\alpha$ and the intermediate rescaling parameter $\rho.$  Given the $\rho=1-\pi$ and $\alpha=\frac{\rho}{1-\rho},$ a reasonable paramter range can be calculated to $\alpha\approx[0.01,0.02].$

\end{appendices}
\pagebreak
\section*{Acknowledgments} 
We would like to thank Dr.~Carlos Castillo-Chavez, Executive Director of the Mathematical and Theoretical Biology Institute (MTBI), for giving us the opportunity to participate in this research program.  We would also like to thank Co-Executive Summer Directors Dr.~Baojun Song and Dr.~Omayra Ortega for their efforts in planning and executing the day to day activities of MTBI. We also want to give special thanks to Abba Gumel(specify any names of research advisors, grad helpers, etc. that you feel have helped). This research was conducted in MTBI at the Simon A. Levin Mathematical, Computational and Modeling Sciences Center (SAL MCMSC) at Arizona State University (ASU). This project has been partially supported by grants from the National Science Foundation (NSF - Grant DMPS-1263374), the National Security Agency (NSA - Grant H98230-13-1-0261), the Office of the President of ASU, and the Office of the Provost of ASU.

\pagebreak

\end{document}